\documentclass[conference]{IEEEtran}
\usepackage{textcomp}
\usepackage[latin9]{inputenc}
\usepackage{array}
\usepackage{float}
\usepackage{booktabs}
\usepackage{amsmath}
\usepackage{amsthm}
\usepackage{amssymb}
\usepackage{graphicx}

\makeatletter

\providecommand{\tabularnewline}{\\}
\floatstyle{ruled}
\newfloat{algorithm}{tbp}{loa}
\providecommand{\algorithmname}{Algorithm}
\floatname{algorithm}{\protect\algorithmname}

\theoremstyle{plain}
\newtheorem{thm}{\protect\theoremname}

\IEEEoverridecommandlockouts

\usepackage{cite}
\usepackage{amsfonts}
\usepackage{amsthm}

\usepackage{textcomp}
\usepackage{xcolor}
\def\BibTeX{{\rm B\kern-.05em{\sc i\kern-.025em b}\kern-.08em
   T\kern-.1667em\lower.7ex\hbox{E}\kern-.125emX}}

\providecommand{\tabularnewline}{\\}
\floatstyle{ruled}
\newfloat{algorithm}{tbp}{loa}
\providecommand{\algorithmname}{Algorithm}
\floatname{algorithm}{\protect\algorithmname}

\usepackage{algorithm}
\usepackage{algpseudocode}
\algnewcommand{\Linecomment}[1]{\Statex \(\triangleright\) #1}
\algrenewcommand\algorithmicrequire{\textbf{Input:}}
\algrenewcommand\algorithmicensure{\textbf{Output:}}
\usepackage[justification=centering]{caption}

\theoremstyle{plain}

\newtheorem{lemma}{Lemma}

\newtheorem{assumption}{Assumption}

\makeatother

\providecommand{\theoremname}{Theorem}

\begin{document}
\title{Scheduling Coflows in Multi-Core OCS Networks with Performance Guarantee}
\author{{\normalsize Xin Wang$^{a}$, Hong Shen$^{a}$, Hui Tian$^{b}$, Dong
Wang$^{a}$}\\
{\normalsize}\\
{\normalsize$^{a}$School of Engineering and Technology, Central Queensland
University, Australia}\\
{\normalsize$^{b}$School of Information and Communication Technology,
Griffith University, Australia}}
\maketitle
\begin{abstract}
Coflow provides a key application-layer abstraction for capturing
communication patterns, enabling the efficient coordination of parallel
data flows to reduce job completion times in distributed systems.
Modern data center networks (DCNs) are employing multiple independent
optical circuit switching (OCS) cores operating concurrently to meet
the massive bandwidth demands of application jobs. However, existing
coflow scheduling research primarily focuses on the single-core setting,
with multi-core fabrics only for EPS (electrical packet switching)
networks.

To address this gap, this paper studies the coflow scheduling problem
in multi-core OCS networks under the \textit{not-all-stop} reconfiguration
model in which one circuit's reconfiguration does not interrupt other
circuits. The challenges stem from two aspects: (i) cross-core coupling
induced by traffic assignment across heterogeneous cores; and (ii)
per-core OCS scheduling constraints, namely \textit{port exclusivity}
and \textit{reconfiguration delay}. We propose an approximation algorithm
that jointly integrates cross-core flow assignment and per-core circuit
scheduling to minimize the total weighted coflow completion time (CCT)
and establish a provable worst-case performance guarantee. Trace-driven
simulations using real Facebook workloads demonstrate that our algorithm
effectively reduces weighted CCT and tail CCT.
\end{abstract}

\section{Introduction}

In large-scale distributed systems, a job typically consists of multiple
communication stages. Each stage generates a set of parallel flows,
and the next computation stage cannot start until all flows in the
current stage have finished. The \textit{coflow} abstraction \cite{networking}
was introduced to group semantically related flows into a unified
scheduling object, enabling coordinated scheduling and performance
optimization. Taking the shuffle stage of MapReduce as an example,
intermediate results are transferred from each map worker to each
reduce worker. Since each reduce worker must collect all required
inputs before continuing execution, the completion time of the shuffle
stage is dominated by the slowest flow. Optimizing only the individual
flow completion time (FCT) is insufficient; more crucial is optimizing
the coflow completion time (CCT), defined as the completion time of
the last flow in the coflow, thereby more directly improving end-to-end
job performance.

Most existing research \cite{literature4,literature6,Baraat,decentralized1,Aalo,CODA,wang2023online,literature9,improved,literature32}
on coflow scheduling is based on the single-core electrical packet
switching (EPS) model, where the data center network (DCN) is abstracted
as a single non-blocking switch fabric with full bisection bandwidth,
which simplifies the characterization of port bandwidth constraints
and algorithm design. However, with the continuous growth of data
center communication scale, the EPS architecture is gradually showing
pressure in terms of bandwidth expansion and corresponding cost and
energy consumption. To address this, optical circuit switching (OCS)
has been introduced into the single-core scenario to improve transmission
efficiency by establishing dedicated high-bandwidth circuits for bulk
data transfers. Under this single-core OCS model, several coflow schedulers
have been developed \cite{omco,reco,regularization,sunflow,zhang2020minimizing}. 

In fact, modern DCN fabrics can employ parallel designs, with multiple
heterogeneous network cores operating concurrently to scale aggregate
bandwidth. In practice, different generations of network architectures
often coexist, forming heterogeneous parallel networks (HPNs) in which
multiple independent network cores jointly serve the same set of hosts
\cite{huang2020weaver}. Motivated by this architectural parallelism,
a few studies have investigated coflow scheduling in multi-core EPS
networks, leveraging parallel packet-switched fabrics to increase
capacity \cite{huang2020weaver,chen2023efficient}. Meanwhile, parallelism
is also emerging in optical switching fabrics. Google\textquoteright s
Jupiter architecture replaces the traditional spine layer with a datacenter
interconnection layer consisting of multiple parallel OCS core, evolving
into a direct-connect topology that enables flexible, datacenter-scale
capacity upgrades and reconfigurations \cite{poutievski2022jupiter}. 

In multi-core OCS networks, scheduling becomes significantly more
challenging. Unlike packet switching, OCS is subject to two key constraints:
(i) \textit{port exclusivity}, meaning that each ingress/egress port
can participate in at most one circuit connection at any given time;
and (ii) \textit{reconfiguration delay}, meaning that circuit switching
incurs a non-negligible delay $\delta$, typically ranging from hundreds
of microseconds to milliseconds. Furthermore, OCS reconfiguration
mechanisms are generally divided into two types: the \textit{all-stop}
and \textit{not-all-stop} models. The former relies on preemptive
scheduling, in which flows from the same coflow can interrupt each
other. The latter depends on non-preemptive scheduling, ensuring that
flows within the same coflow are completed without interruption once
started. This paper focuses on the more general and practically relevant
\textit{not-all-stop} (asynchronous) model, which further exacerbates
the complexity of resource coupling and scheduling decisions.

When multiple OCS cores operate in parallel, coflow scheduling must
jointly determine (i) how to assign traffic (flows) to different cores,
and (ii) how to configure circuits within each core, while respecting
one-to-one port exclusivity and non-negligible reconfiguration delay
under the \textit{not-all-stop} model. These cross-core coupled decisions
and OCS-specific constraints make the problem substantially more complex
than in single-core OCS or multi-core EPS setups. This paper investigates
multi-coflow scheduling in multi-core OCS networks and presents an
approximation algorithm that integrates cross-core flow assignment
with per-core circuit scheduling. To the best of our knowledge, this
is the first work that provides a provable performance guarantee for
minimizing the total weighted CCT in multi-core OCS networks, thereby
filling a notable research gap. 

\section{System Model\label{sec:Model-and-Problem}}

In this section, we present the network model, and then formally define
the multi-coflow scheduling problem in heterogeneous parallel networks
(HPNs). For clarity and consistency, the main notations used throughout
the paper are summarized in Table \ref{tab:notation1}.

\begin{table}[h]
{}{}\vspace*{0.8\baselineskip}

\caption{Mathematical Notations}
\label{tab:notation1} %
\begin{tabular}{l>{\raggedright}p{0.75\linewidth}}
\toprule 
{}{}Symbol & {}{}Definition\tabularnewline
\midrule 
{}{}$\mathcal{M}$ & {}{}Set of coflows\tabularnewline
{}{}$M$ & {}{}Number of coflows, i.e., $M=\left|\mathcal{M}\right|$\tabularnewline
{}{}$\mathcal{K}$ & {}{}Set of parallel OCS cores\tabularnewline
{}{}$K$ & {}{}Number of OCS cores, i.e., $K=\left|\mathcal{K}\right|$\tabularnewline
{}{}$N$ & {}{}Number of ingress/egress ports per core\tabularnewline
{}{}$C_{m}$ & {}{}The $m$-th coflow, where $1\leq m\leq M$\tabularnewline
{}{}$\mathcal{F}_{m}$ & {}{}Set of flows in $C_{m}$\tabularnewline
{}{}$D_{m}$ & {}{}Demand matrix of $C_{m}$\tabularnewline
{}{}$D_{m}^{k}$ & {}{}Portion of $D_{m}$ assigned to core $k$\tabularnewline
{}{}$f_{m}\left(i,j\right)$ & {}{}Flow from ingress port $i$ to egress port $j$ of $C_{m}$\tabularnewline
{}{}$f_{m}^{k}\left(i,j\right)$ & {}{}Subflow of $f_{m}\left(i,j\right)$ transmitted on core $k$\tabularnewline
{}{}$t_{m}^{k}\left(i,j\right)$ & {}{}Circuit establishment time of $f_{m}^{k}\left(i,j\right)$ on
core $k$\tabularnewline
{}{}$d_{m}\left(i,j\right)$ & {}{}Data size of $f_{m}\left(i,j\right)$\tabularnewline
{}{}$d_{m}^{k}\left(i,j\right)$ & {}{}Data size of $f_{m}^{k}\left(i,j\right)$\tabularnewline
{}{}$\rho_{m}$/$\rho_{m}^{k}$ & {}{}Maximum row or column sum of $D_{m}$/$D_{m}^{k}$\tabularnewline
{}{}$\tau_{m}$/$\tau_{m}^{k}$ & {}{}Maximum number of nonzero entries (flows) in any row or column
of $D_{m}$/$D_{m}^{k}$\tabularnewline
{}{}$\pi$ & {}{}Global coflow order\tabularnewline
{}{}$D_{1:m}$ & {}{}Prefix-aggregated matrix of first $m$ coflows under $\pi$,
i.e., $D_{1:m}\triangleq\sum_{\ell=1}^{m}D_{\pi\left(\ell\right)}$\tabularnewline
{}{}$D_{1:m}^{k}$ & {}{}Prefix-aggregated matrix on core $k$ under $\pi$, i.e., $D_{1:m}^{k}\triangleq\sum_{\ell=1}^{m}D_{\pi\left(\ell\right)}^{k}$\tabularnewline
{}{}$\rho_{1:m}$/$\rho_{1:m}^{k}$ & {}{}Maximum row or column sum of $D_{1:m}$/$D_{1:m}^{k}$\tabularnewline
{}{}$\tau_{1:m}$/$\tau_{1:m}^{k}$ & {}{}Maximum number of nonzero entries (flows) in any row or column
of $D_{1:m}$/$D_{1:m}^{k}$\tabularnewline
{}{}$r^{k}$ & {}{} Per-port transmission rate of core $k$\tabularnewline
{}{}$w_{m}$ & {}{}Weight of $C_{m}$\tabularnewline
\bottomrule
\end{tabular}{}{}
\end{table}

\subsection{Network Model}

We consider a heterogeneous multi-core data center network (DCN) architecture,
modeled as $K$ independent, non-blocking $N\times N$ switches operating
in parallel. Each core corresponds to an optical circuit switch, indexed
by $k\in\left\{ 1,\ldots,K\right\} $.

The network interconnects $N$ source servers $\left\{ s_{1},\ldots,s_{N}\right\} $
and $N$ destination servers $\left\{ d_{1},\ldots,d_{N}\right\} $.
Each source server is equipped with $K$ parallel uplinks, each connected
to a specific OCS core, and each destination server has $K$ corresponding
downlinks. For each core $k$, source server $s_{i}$ is connected
to ingress port $i$, and destination server $d_{j}$ is connected
to egress port $j$, where $i,j\in\left\{ 1,\ldots,N\right\} $. Each
core $k$ operates independently at a per-port transmission rate $r^{k}$,
capturing heterogeneous link capacities across cores. Thus, traffic
can be distributed across multiple cores, while circuit scheduling
is performed independently within each core.

\subsection{Problem Definition}

Given a set of coflows $\mathcal{M}=\left\{ C_{1},\ldots,C_{M}\right\} $
that arrive simultaneously, each coflow $C_{m}$ is represented by
an $N\times N$ demand matrix $D_{m}$ and is associated with a positive
weight $w_{m}$. We consider scheduling all flows $f_{m}\left(i,j\right)$
over a $K$-core OCS network under the asynchronous reconfiguration
model. A feasible schedule consists of the following components: (1)
a permutation $\pi$ of $\left\{ 1,...,M\right\} $ that specifies
the global execution order of coflows; (2) for each coflow $C_{m}$
with a demand matrix $D_{m}$, determine an assignment $\left\{ D_{m}^{k}\right\} _{k=1}^{K}$
such that $D_{m}=\sum_{k=1}^{K}D_{m}^{k}$, where $D_{m}^{k}=\left[d_{m}^{k}\left(i,j\right)\right]_{1\leq i,j\leq N}$
denotes the portion of $D_{m}$ assigned to core $k$, satisfying
$d_{m}^{k}\left(i,j\right)\ge0$ and $\sum_{k=1}^{K}d_{m}^{k}\left(i,j\right)=d_{m}\left(i,j\right)$,
for all $1\le i,j\le N$; (3) for each core $k$ and each subflow
$f_{m}^{k}\left(i,j\right)$ with $d_{m}^{k}\left(i,j\right)>0$,
determine a circuit schedule $S_{m}^{k}=\left\{ i,j,t_{m}^{k}\left(i,j\right)\right\} $
that specifies the circuit establishment time $t_{m}^{k}\left(i,j\right)$
of $f_{m}^{k}\left(i,j\right)$. 

Under the \textit{not-all-stop} model, transmission of subflow $f_{m}^{k}\left(i,j\right)$
starts at $t_{m}^{k}\left(i,j\right)+\delta$ and completes at $T_{m}^{k}\left(i,j\right)=t_{m}^{k}\left(i,j\right)+\delta+\frac{d_{m}^{k}\left(i,j\right)}{r^{k}}$.
The completion time of the portion of coflow $C_{m}$ on core $k$
is $T_{m}^{k}=\max_{i,j}T_{m}^{k}\left(i,j\right)$ and the overall
coflow completion time (CCT) is $T_{m}=\max_{k}T_{m}^{k}$. Our objective
is to minimize the total weighted CCT: $\min\sum_{m=1}^{M}w_{m}T_{m}$.

\section{Multi-Core Coflow Scheduling\label{sec:Multiple-Coflow-Scheduling}}

In this section, we develop an approximation algorithm for multi-coflow
scheduling in multi-core OCS networks under the \textit{not-all-stop}
reconfiguration model and establish a provable approximation ratio.
We start by deriving the lower bound on the coflow completion time
(CCT), which characterizes the minimum possible completion time in
a heterogeneous multi-core OCS network, to guide algorithm design
and performance analysis.

\subsection{Derivation of the Lower Bound}

Consider a coflow $C_{m}$ with demand matrix $D_{m}=\big[d_{m}\left(i,j\right)\big]_{1\le i,j\le N}$.
Define the $i$-th ingress load and $j$-th egress load of $D_{m}$
as $d_{m,i}=\sum_{j=1}^{N}d_{m}\left(i,j\right)$ and $d_{m,j}=\sum_{i=1}^{N}d_{m}\left(i,j\right)$,
respectively. Then, define the maximum port load as $\rho_{m}=\max\Big\{\max_{i}d_{m,i},\max_{j}d_{m,j}\Big\}$.
In a $K$-core OCS network, let $r^{k}$ denote the per-port transmission
rate of core $k$, and let $R=\sum_{k=1}^{K}r^{k}$ denote the aggregated
port rate.

\subsubsection{\textit{Per-core Lower Bound}}

Let $\{D_{m}^{k}\}_{k=1}^{K}$ be any assignment such that $D_{m}=\sum_{k=1}^{K}D_{m}^{k}$,
where $D_{m}^{k}=\big[d_{m}^{k}\left(i,j\right)\big]_{1\le i,j\le N}$
denote the portion of $D_{m}$ assigned to core $k$. For each core
$k$, define the per-port loads $d_{m,i}^{k}=\sum_{j=1}^{N}d_{m}^{k}\left(i,j\right)$,
$d_{m,j}^{k}=\sum_{i=1}^{N}d_{m}^{k}\left(i,j\right)$, and the maximum
port load as $\rho_{m}^{k}=\max\left\{ \max_{i}d_{m,i}^{k},\max_{j}d_{m,j}^{k}\right\} $.
Furthermore, define $\tau_{m,i}^{k}=\sum_{j=1}^{N}\mathbf{1}\left[d_{m}^{k}\left(i,j\right)>0\right]$
and $\tau_{m,j}^{k}=\sum_{i=1}^{N}\mathbf{1}\left[d_{m}^{k}\left(i,j\right)>0\right]$,
which denote the number of nonzero entries in row $i$ and column
$j$, respectively, where $\mathbf{1}[\cdot]$ is the indicator function.

For a given assignment $\left\{ D_{m}^{k}\right\} _{k=1}^{K}$, define
$T_{\textrm{LB}}^{k}\left(\cdot\right)$ as the CCT lower-bound function
of the traffic assigned to core $k$. For any nonzero demand matrix
$D_{m}^{k}\neq\mathbf{0}_{N\times N}$, the per-core lower bound is
given by
\begin{equation}
T_{\textrm{LB}}^{k}\left(D_{m}^{k}\right)=\max\left\{ \max_{1\le i\le N}L_{m,i}^{k},\max_{1\le j\le N}L_{m,j}^{k}\right\} ,\label{eq:per-core-lb}
\end{equation}
where $L_{m,i}^{k}=\frac{d_{m,i}^{k}}{r^{k}}+\tau_{m,i}^{k}\delta$
and $L_{m,j}^{k}=\frac{d_{m,j}^{k}}{r^{k}}+\tau_{m,j}^{k}\delta$.

The per-core lower bound is derived from the \textit{port exclusivity}
and \textit{reconfiguration delay} constraints. Since each port can
participate in at most one circuit at a time, the ingress port $i$
requires at least $d_{m,i}^{k}/r^{k}$ transmission time. Moreover,
each nonzero flow incident to that port requires a circuit establishment,
introducing at least $\tau_{m,i}^{k}\delta$ total reconfiguration
delay. The same argument applies to each egress port $j$.

\subsubsection{\textit{Global Lower Bound}}

Note that $T_{\textrm{LB}}^{k}\left(\cdot\right)$ depends on the
specific flow assignment $\{D_{m}^{k}\}_{k=1}^{K}$, which is determined
by the scheduling algorithm. To derive an approximation ratio against
the optimal schedule, we therefore require a global lower bound that
depends only on the original demand matrix $D_{m}$ and network parameters,
independent of any particular assignment and schedule.

Define $T_{\textrm{LB}}\left(\cdot\right)$ as the global CCT lower-bound
function of the traffic. Let $T_{\textrm{LB}}\left(D_{m}\right)$
denote the global lower bound of coflow $C_{m}$. For any demand matrix
$D_{m}\neq\mathbf{0}_{N\times N}$, we obtain
\begin{equation}
T_{\textrm{LB}}\left(D_{m}\right)=\delta+\frac{\rho_{m}}{R}.\label{eq:global-lb}
\end{equation}

Let $T_{m}^{k}\left(D_{m}^{k}\right)$ denote the completion time
of the portion of coflow $C_{m}$ assigned to core $k$.

\begin{lemma} [\textit{Global Lower Bound}] In a $K$-core OCS
network, for any coflow $C_{m}$ with demand matrix $D_{m}$, the
completion time of any feasible schedule satisfies $T_{m}\ge T_{\textrm{LB}}\left(D_{m}\right)=\delta+\frac{\rho_{m}}{R}$.\label{single_global_lb}\end{lemma}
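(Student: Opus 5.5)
The plan is to reduce the multi-core problem to a single bottleneck port and then average across cores. Since $D_{m}\neq\mathbf{0}_{N\times N}$, we have $\rho_{m}>0$. Without loss of generality, let the maximum in the definition of $\rho_{m}$ be attained at some ingress port $i^{*}$, so $d_{m,i^{*}}=\rho_{m}$; the egress case is symmetric. Fix any feasible schedule with an assignment $\{D_{m}^{k}\}_{k=1}^{K}$ and circuit establishment times $t_{m}^{k}(i,j)\ge0$. Let $\mathcal{K}^{+}=\{k: d_{m,i^{*}}^{k}>0\}$. This set is nonempty because $\sum_{k}d_{m,i^{*}}^{k}=d_{m,i^{*}}=\rho_{m}>0$.

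\textbf{Per-core bound at the bottleneck port.} Fix $k\in\mathcal{K}^{+}$. Under port exclusivity, the circuits incident to ingress port $i^{*}$ on core $k$ occupy pairwise disjoint intervals $[t_{m}^{k}(i^{*},j),T_{m}^{k}(i^{*},j)]$. Each such interval has length $\delta+d_{m}^{k}(i^{*},j)/r^{k}$. Ordering these intervals by start time, the last one to finish ends no earlier than the sum of their lengths, since all start at or after time $0$. This gives
\begin{equation*}
T_{m}\ge T_{m}^{k}\ge\max_{j}T_{m}^{k}(i^{*},j)\ge\tau_{m,i^{*}}^{k}\delta+\frac{d_{m,i^{*}}^{k}}{r^{k}}\ge\delta+\frac{d_{m,i^{*}}^{k}}{r^{k}}.
\end{equation*}
The last step uses $\tau_{m,i^{*}}^{k}\ge1$ for $k\in\mathcal{K}^{+}$. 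This is essentially the per-core bound \eqref{eq:per-core-lb} specialized to port $i^{*}$. The key point is that the $\delta$ is paid on every core carrying traffic of port $i^{*}$, so it is not divided among the cores.

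\textbf{Averaging across cores.} From the previous step, $T_{m}-\delta\ge d_{m,i^{*}}^{k}/r^{k}$ for every $k\in\mathcal{K}^{+}$, that is, $d_{m,i^{*}}^{k}\le r^{k}(T_{m}-\delta)$. Summing over $k\in\mathcal{K}^{+}$ gives
\begin{equation*}
\rho_{m}=\sum_{k\in\mathcal{K}^{+}}d_{m,i^{*}}^{k}\le(T_{m}-\delta)\sum_{k\in\mathcal{K}^{+}}r^{k}\le(T_{m}-\delta)R.
\end{equation*}
The final inequality uses $T_{m}-\delta\ge0$, which holds by the previous step, and $r^{k}>0$. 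Rearranging yields $T_{m}\ge\delta+\rho_{m}/R=T_{\textrm{LB}}(D_{m})$.

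The main point needing care is the per-core step: justifying that the reconfiguration delay cannot overlap with transmission on the same port. This follows because under the not-all-stop model a port is held by its circuit from establishment $t_{m}^{k}(i,j)$ through completion $T_{m}^{k}(i,j)$. The averaging step itself is routine, with one subtlety: it must be restricted to cores that actually carry load at $i^{*}$. Otherwise the $\delta$ term would not be justified for cores that carry no such traffic.
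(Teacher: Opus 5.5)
Your proof is correct and follows essentially the same route as the paper: a per-core port bound in which every core carrying traffic pays a full $\delta$, followed by a rate-weighted averaging across cores at the global bottleneck port. Summing $d_{m,i^{*}}^{k}\le r^{k}(T_{m}-\delta)$ is exactly the paper's ``maximum is at least the weighted average'' step combined with $\sum_{k}\rho_{m}^{k}\ge\rho_{m}$; fixing $i^{*}$ up front and restricting to cores with $d_{m,i^{*}}^{k}>0$ is a minor reorganization that also makes explicit the empty-core case the paper leaves implicit.
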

\begin{IEEEproof}
The per-core lower bound (Eq. (\ref{eq:per-core-lb})) can be relaxed
as
\begin{equation}
\begin{aligned}T_{\textrm{LB}}^{k}\left(D_{m}^{k}\right) & \ge\max\left\{ \frac{\max_{i}d_{m,i}^{k}}{r^{k}}+\delta,\frac{\max_{j}d_{m,j}^{k}}{r^{k}}+\delta\right\} \\
 & =\frac{1}{r^{k}}\max\left\{ \max_{i}d_{m,i}^{k},\max_{j}d_{m,j}^{k}\right\} +\delta=\frac{\rho_{m}^{k}}{r^{k}}+\delta.
\end{aligned}
\label{eq:relaxed-core-lb}
\end{equation}

Since any feasible schedule on core $k$ satisfies $T_{m}^{k}\left(D_{m}^{k}\right)\ge T_{\textrm{LB}}^{k}\left(D_{m}^{k}\right)$
and $T_{m}=\max_{k}T_{m}^{k}\left(D_{m}^{k}\right)$, we obtain
\begin{equation}
\begin{aligned}T_{m}\ge\max_{k}T_{\textrm{LB}}^{k}\left(D_{m}^{k}\right)\ge\delta+\max_{k}\frac{\rho_{m}^{k}}{r^{k}}.\end{aligned}
\label{eq:local-cct-lb}
\end{equation}

Using the fact that the maximum is no smaller than the weighted average,
hence
\begin{equation}
\max_{k}\frac{\rho_{m}^{k}}{r^{k}}\ge\frac{\sum_{k=1}^{K}r^{k}\cdot\frac{\rho_{m}^{k}}{r^{k}}}{\sum_{k=1}^{K}r^{k}}=\frac{\sum_{k=1}^{K}\rho_{m}^{k}}{R}.
\end{equation}

Finally, let $p^{*}$ be a port (ingress or egress) attaining the
maximum load in $D_{m}$, so that $\rho_{m}=d_{m,p^{*}}=\sum_{k=1}^{K}d_{m,p^{*}}^{k}$.
Since $\rho_{m}^{k}=\underset{p}{\max}d_{m,p}^{k}\geq d_{m,p^{*}}^{k}$,
for each $k$, we have 
\begin{equation}
\sum_{k=1}^{K}\rho_{m}^{k}\geq\sum_{k=1}^{K}d_{m,p^{*}}^{k}=\rho_{m}.
\end{equation}

Combining the above inequalities yields
\begin{equation}
T_{m}\ge\delta+\frac{\rho_{m}}{R}=T_{\textrm{LB}}\left(D_{m}\right).\label{eq:global-cct-lb}
\end{equation}

This completes the proof.
\end{IEEEproof}

\subsection{Approximation Algorithm}

Algorithm \ref{alg:alg1} consists of three components: (i) global
coflow ordering, (ii) cross-core flow assignment, and (iii) intra-core
circuit scheduling. The algorithm is designed based on the per-core
lower bound $T_{\textrm{LB}}^{k}\left(\cdot\right)$ and the global
lower bound $T_{\textrm{LB}}\left(\cdot\right)$.

\subsubsection{Global Coflow Ordering}

We compute a global permutation $\pi$ over all coflows and enforce
it consistently across all cores. Each coflow $C_{m}$ is assigned
a priority score $w_{m}/T_{\textrm{LB}}\left(D_{m}\right)$, where
$T_{\textrm{LB}}\left(D_{m}\right)$ captures a fundamental lower
bound on the minimum completion time of $C_{m}$ in the multi-core
network. Coflows are then ordered in non-increasing order of this
score. This rule favors coflows with high weights and low inherent
service requirements, approximating the weighted shortest-processing-time
(WSPT) principle.

\subsubsection{Cross-Core Flow Assignment}

Coflows are processed sequentially according to $\pi$. Each flow
is assigned entirely to a single core, and flow splitting is prohibited
to avoid packet reordering, buffering overhead, and additional control-plane
complexity in practical multi-core OCS deployments \cite{huang2020weaver,chen2023scheduling}.
Restricting assignment to the flow granularity preserves analytical
tractability while maintaining practical implementability. For each
flow $f_{\pi\left(m\right)}\left(i,j\right)$, we select the core
that yields the minimum per-core prefix lower bound after assignment,
thereby controlling the growth of the maximum prefix lower bound across
cores. The assignment order of flows within a coflow does not affect
the approximation guarantee. In practice, assigning larger flows earlier
may help reduce their impact on the final coflow completion time (CCT).

\subsubsection{Intra-Core Circuit Scheduling}

After assignment, each core schedules its assigned traffic independently
while respecting the global order $\pi$. The per-core circuit scheduling
policy satisfies the following properties: (1) each ingress and egress
port participates in at most one active circuit at any time, satisfying
the one-to-one port matching constraint of OCS; (2) once a flow starts
transmission, it proceeds to completion without interruption, avoiding
additional reconfiguration overhead; (3) when there are no higher-priority
flows on a port pair, lower-priority flows can be processed, thus
ensuring that no allowed port pair is unnecessarily idle.

\begin{algorithm}[h]
\caption{Multi-coflow Scheduling in Multi-Core Networks}
\label{alg:alg1} \textbf{Input:} demand matrices $\left\{ D_{m}=\left[d_{m}\left(i,j\right)\right]\right\} _{m=1}^{M}$;
weights $\left\{ w_{m}\right\} _{m=1}^{M}$; core rates $\left\{ r^{k}\right\} _{k=1}^{K}$;
reconfiguration delay $\delta$\\
 \textbf{Output:} global order $\pi$, assignments $\left\{ D_{\pi\left(m\right)}^{k}\right\} _{m=1}^{M}$
and schedules $\left\{ S_{\pi\left(m\right)}^{k}\right\} _{m=1}^{M}$
for all cores \begin{algorithmic}[1] \For{$m=1$ to $M$} \State
$\ensuremath{s_{m}\leftarrow w_{m}/T_{\textrm{LB}}\left(D_{m}\right)}$,
\Comment{$T_{\textrm{LB}}\left(D_{m}\right)=\delta+\rho_{m}/R$}\EndFor\State
$\pi\left(1:M\right)\leftarrow$ Sort coflows in non-increasing order
of $s_{m}$ \State Initialize $D_{1:0}^{k}\leftarrow\mathbf{0}_{N\times N}$
for all $\ensuremath{k=1,\ldots,K}$\For{$m=1$ to $M$}\State
Initialize $\ensuremath{D_{1:m}^{k}\leftarrow D_{1:m-1}^{k}}$ for
all $k=1,\ldots,K$\State Initialize $D_{\pi\left(m\right)}^{k}\leftarrow\mathbf{0}_{N\times N}$
for all $k=1,\ldots,K$ \State $\mathcal{F}_{\pi\left(m\right)}=\left\{ f_{\pi\left(m\right)}\left(i,j\right)\mid d_{\pi\left(m\right)}\left(i,j\right)>0\right\} $
\State Sort $\mathcal{F}_{\pi\left(m\right)}$ in non-increasing
order of $d_{\pi\left(m\right)}\left(i,j\right)$ \For{each flow
$f_{\pi\left(m\right)}\left(i,j\right)$ in $\mathcal{F}_{\pi\left(m\right)}$}\State
$k^{*}\leftarrow\mathrm{argmin}_{k}T_{\textrm{LB}}^{k}\left(D_{1:m}^{k}\oplus d_{\pi\left(m\right)}\left(i,j\right)\right)$\State
Assign the entire flow $f_{\pi\left(m\right)}\left(i,j\right)$ to
core $k^{*}$ \State $D_{\pi\left(m\right)}^{k^{*}}=D_{\pi\left(m\right)}^{k^{*}}\oplus d_{\pi\left(m\right)}\left(i,j\right)$
\State $D_{1:m}^{k^{*}}\leftarrow D_{1:m}^{k^{*}}\oplus d_{\pi\left(m\right)}\left(i,j\right)$
\EndFor\EndFor \For{$k=1$ to $K$}\For{$m=1$ to $M$}\State
$\ensuremath{S_{\pi\left(m\right)}^{k}\leftarrow\emptyset}$\EndFor\State
$\mathcal{F}^{k}=\bigcup_{m=1}^{M}\left\{ f_{\pi\left(m\right)}^{k}\left(i,j\right)\mid d_{\pi\left(m\right)}^{k}\left(i,j\right)>0\right\} $
\While{$\mathcal{F}^{k}\neq\emptyset$}\For{each $f_{\pi\left(m\right)}^{k}\left(i,j\right)\in\mathcal{F}^{k}$}\If{both
ingress $i$ and egress $j$ are idle} \State $T_{\pi\left(m\right)}^{k}\left(i,j\right)\leftarrow t_{\pi\left(m\right)}^{k}\left(i,j\right)+\delta+\frac{d_{\pi\left(m\right)}^{k}\left(i,j\right)}{r^{k}}$
\Comment{$t_{\pi\left(m\right)}^{k}\left(i,j\right)\leftarrow$
earliest feasible time} \State Add $\left(i,j,t_{\pi\left(m\right)}^{k}\left(i,j\right)\right)$
to $S_{\pi\left(m\right)}^{k}$\State Remove $f_{\pi\left(m\right)}^{k}\left(i,j\right)$
from $\mathcal{F}^{k}$\EndIf\EndFor\EndWhile\EndFor \end{algorithmic}
\end{algorithm}

The algorithm operates as follows. First, the global coflow priority
order is determined (Lines 1-4). For each coflow $C_{m}$, we compute
a priority score $s_{m}=w_{m}/T_{\textrm{LB}}\left(D_{m}\right)$
(Line 2), where $T_{\textrm{LB}}\left(D_{m}\right)=\delta+\rho_{m}/R$
is the minimum possible processing time of $C_{m}$ when scheduled
alone on the multi-core network. Coflows are then sorted in non-increasing
order of $s_{m}$ to obtain the global execution order $\pi(1:M)$
(Line 4).

Next, the algorithm enters the flow assignment phase (Lines 5-17).
For each core $k$, we maintain a prefix-aggregated matrix $D_{1:m}^{k}=\sum_{\ell=1}^{m}D_{\pi\left(\ell\right)}^{k}$
representing the aggregated traffic assigned to core $k$ from the
first $m$ coflows under $\pi$. $D_{1:0}^{k}$ is initialized for
each core $k$ (Line 5). Then, for each coflow $C_{\pi\left(m\right)}$
processed in order (Line 6), we initialize $D_{1:m}^{k}\leftarrow D_{1:m-1}^{k}$
for all $k$ (Line 7), meaning that each core inherits the prefix
load contributed by the previous $m-1$ coflows before assigning any
flow of $C_{\pi\left(m\right)}$. We also initialize the per-core
assignment matrices $D_{\pi\left(m\right)}^{k}$ to zero (Line 8).
Let $\mathcal{F}_{\pi\left(m\right)}$ denote the set of nonzero flows
in $C_{\pi\left(m\right)}$ (Line 9), which is sorted in non-increasing
order of size (Line 10). For each flow $f_{\pi\left(m\right)}\left(i,j\right)\in\mathcal{F}_{\pi\left(m\right)}$
(Line 11), we tentatively places it on every core $k$ by forming
$D_{1:m}^{k}\oplus d_{\pi\left(m\right)}\left(i,j\right)$, which
increases the $\left(i,j\right)$ entry of $D_{1:m}^{k}$ by $d_{\pi\left(m\right)}\left(i,j\right)$,
i.e., $D_{1:m}^{k}+d_{\pi\left(m\right)}\left(i,j\right)E_{ij}$,
where $E_{ij}\in\mathbb{R}^{N\times N}$ is the standard basis matrix
whose $\left(i,j\right)$-th entry equals 1. It then selects $k^{*}\leftarrow\mathrm{argmin}_{k}T_{\textrm{LB}}^{k}\left(D_{1:m}^{k}\oplus d_{\pi\left(m\right)}\left(i,j\right)\right)$
(Line 12), i.e., the core that yields the smallest per-core lower
bound after adding this flow. The entire flow is assigned to core
$k^{*}$ (Line 13), and both $D_{\pi\left(m\right)}^{k^{*}}$ and
$D_{1:m}^{k^{*}}$ are updated accordingly (Lines 14-15). After all
flows of $C_{\pi\left(m\right)}$ are assigned, the matrices $\left\{ D_{\pi\left(m\right)}^{k}\right\} _{k=1}^{K}$
constitute its cross-core assignment, and $\left\{ D_{1:m}^{k}\right\} _{k=1}^{K}$
are used for the next coflow.

After assignment, circuit scheduling is performed independently on
each core (Lines 18--32). For each core $k$, we first initialize
the circuit schedule $S_{\pi\left(m\right)}^{k}$ for all coflows
(Lines 19-21). We then construct the set $\mathcal{F}^{k}=\bigcup_{m=1}^{M}\left\{ f_{\pi\left(m\right)}^{k}\left(i,j\right)\mid d_{\pi\left(m\right)}^{k}\left(i,j\right)>0\right\} $,
which contains all flows assigned to core $k$ (Line 22). The scheduling
process respects the global order $\pi$. While $\mathcal{F}^{k}$
is non-empty (Line 23), the scheduler scans the flows in $\mathcal{F}^{k}$
according to $\pi$ (Line 24), and selects flow $f_{\pi\left(m\right)}^{k}\left(i,j\right)$
sequentially whose ingress port $i$ and egress port $j$ are both
idle (Line 25). Such a flow is scheduled at the earliest feasible
time when both ports become available, and its completion time is
$T_{\pi\left(m\right)}^{k}\left(i,j\right)\leftarrow t_{\pi\left(m\right)}^{k}\left(i,j\right)+\delta+\frac{d_{\pi\left(m\right)}^{k}\left(i,j\right)}{r^{k}}$
(Line 26). The scheduled flow is then recorded in $S_{\pi\left(m\right)}^{k}$
(Line 27) and removed from $\mathcal{F}^{k}$ (Line 28).

\subsection{Analysis of Performance Guarantees\label{sec:Theoretical-Analysis_Multiple}}

\subsubsection{Derivation of Assignment-Phase Prefix Bound}

Let $\pi$ denote the global coflow order produced by the ordering
phase of Algorithm \ref{alg:alg1}. For any $m\in\left\{ 1,\ldots,M\right\} $,
define the prefix-aggregated demand matrix $D_{1:m}=\sum_{\ell=1}^{m}D_{\pi\left(\ell\right)}$,
and for each core $k\in\left\{ 1,\ldots,K\right\} $, $D_{1:m}^{k}=\sum_{\ell=1}^{m}D_{\pi\left(\ell\right)}^{k}$.
Let $d_{1:m,i}=\sum_{j=1}^{N}d_{1:m}\left(i,j\right)$ and $d_{1:m,j}=\sum_{i=1}^{N}d_{1:m}\left(i,j\right)$
denote the row and column loads of $D_{1:m}=\big[d_{1:m}\left(i,j\right)\big]_{1\le i,j\le N}$,
respectively, and define the maximum port load $\rho_{1:m}=\max\left\{ \max_{i}d_{1:m,i},\max_{j}d_{1:m,j}\right\} $.
Let $\tau_{1:m,i}=\sum_{j=1}^{N}\mathbf{1}\left[d_{1:m}\left(i,j\right)>0\right]$
and $\tau_{1:m,j}=\sum_{i=1}^{N}\mathbf{1}\left[d_{1:m}\left(i,j\right)>0\right]$
denote the number of nonzero entries in row $i$ and column $j$ of
$D_{1:m}$, respectively, where $\mathbf{1}[\cdot]$ is the indicator
function. Define the maximum number of nonzero entries in any row
or column of $D_{1:m}$ as $\tau_{1:m}=\max\left\{ \max_{i}\tau_{1:m,i},\max_{j}\tau_{1:m,j}\right\} $.
Let $r_{\max}=\max_{k}r^{k}$.

\begin{lemma}[Assignment-Phase Prefix Bound] For any $m=1,\ldots,M$,
the prefix-aggregated matrices $\left\{ D_{1:m}^{k}\right\} {}_{k=1}^{K}$
produced by the assignment phase of Algorithm \ref{alg:alg1} satisfy
$\max_{k}T_{\textrm{LB}}^{k}\left(D_{1:m}^{k}\right)\le\frac{\rho_{1:m}}{r_{\max}}+\tau_{1:m}\delta$.\label{multiple-assignment-stage}\end{lemma}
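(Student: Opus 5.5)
The plan is to combine the greedy rule of Line 12 with a comparison against the fastest core $k_{\max}\in\arg\max_k r^k$, so that $r^{k_{\max}}=r_{\max}$. There are two ingredients: a domination fact showing that every per-core lower bound arising during the first $m$ rounds is bounded by the right-hand side, and a ``last flow'' argument that transfers this bound to the final matrices $D_{1:m}^{k}$.

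\emph{Step 1 (domination).} Fix $m$ and consider any moment during the assignment phase while processing coflows $\pi(1),\ldots,\pi(m')$ with $m'\le m$. Let $A$ be any per-core matrix present at that moment, or any tentative matrix $D_{1:m'}^{k}\oplus d_{\pi(m')}(i,j)$ formed on Line 12. I will show that $A$ is entrywise dominated by $D_{1:m}$, and that its support is contained in the support of $D_{1:m}$.

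Each entry of $A$ is a sum of whole flows $d_{\pi(\ell)}(i,j)$ with $\ell\le m'\le m$. Every flow is assigned to exactly one core, so $A(i,j)\le d_{1:m}(i,j)$. Nonzero entries of $A$ occur only where $d_{1:m}(i,j)>0$. Repeated placement of flows at an already-nonzero position $(i,j)$ does not increase the support. Hence every row or column load of $A$ is at most $\rho_{1:m}$, and every row or column nonzero count of $A$ is at most $\tau_{1:m}$.

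Taking $k=k_{\max}$ in Eq.~(\ref{eq:per-core-lb}), each term $L_{i}$ or $L_{j}$ for $A$ is then at most $\rho_{1:m}/r_{\max}+\tau_{1:m}\delta$. So for any tentative matrix on $k_{\max}$ at any step up to round $m$,
\[
T_{\textrm{LB}}^{k_{\max}}\big(D_{1:m'}^{k_{\max}}\oplus d_{\pi(m')}(i,j)\big)\le\frac{\rho_{1:m}}{r_{\max}}+\tau_{1:m}\delta .
\]

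\emph{Step 2 (last-flow argument).} Fix any core $k$. If $D_{1:m}^{k}=\mathbf{0}$, there is nothing to prove; I will take $T_{\textrm{LB}}^{k}(\mathbf{0})=0$, which matches the formula. Otherwise, let $f$ be the last flow, among coflows $\pi(1),\ldots,\pi(m)$, that the algorithm assigned to core $k$. After $f$ is placed, core $k$ receives no more traffic up to the end of round $m$. Its matrix therefore stays fixed, and $T_{\textrm{LB}}^{k}(D_{1:m}^{k})$ equals the value of $T_{\textrm{LB}}^{k}$ right after $f$ was added.

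By Line 12, $k$ minimized $T_{\textrm{LB}}^{k'}(\cdot\oplus f)$ over all cores $k'$. So this value is at most the tentative value on $k_{\max}$ at that same moment, which Step 1 bounds by $\rho_{1:m}/r_{\max}+\tau_{1:m}\delta$. Taking the maximum over $k$ gives the lemma.

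\emph{Main obstacle.} The delicate point is Step 1's handling of the $\tau$ term. I must make sure that merging several flows of different coflows at the same position $(i,j)$ on a core never counts more than one circuit. This holds by the definition of $\tau^{k}$ through indicators on entries of the aggregated matrix. I must also confirm that the bounds at an earlier round $m'$ are dominated by those at round $m$, which holds by monotonicity of $\rho_{1:m}$ and $\tau_{1:m}$ in $m$.

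Two further checks are needed. First, the tie-breaking in the argmin is irrelevant, since any minimizer is no worse than $k_{\max}$. Second, the relevant comparison uses the state of core $k_{\max}$ at the time $f$ was assigned, not its final state. Step 1 covers this, because it holds at every intermediate moment.
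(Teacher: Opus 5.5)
Your proof is correct and follows the paper's argument: you take the last flow placed on each core, compare it via the greedy rule of Line 12 against the tentative matrix on another core, and bound that tentative matrix using domination by $D_{1:m}$ (what the paper calls ``monotonicity of $T_{\textrm{LB}}^{k}$''). The only differences are that you pick the fastest core $k_{\max}$ directly instead of passing through $\min_{k}T_{\textrm{LB}}^{k}(D_{1:m})$ and then minimizing over $k$, and that you spell out the entrywise and support domination the paper leaves implicit.
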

\begin{IEEEproof}
Consider any non-empty core $k_{1}$ after processing the first $m$
coflows. Let $\bar{f}^{k_{1}}\left(i,j\right)$ be the last flow assigned
to core $k_{1}$ during the assignment of the first $m$ coflows,
and let $\bar{d}^{k_{1}}\left(i,j\right)$ denote its size. Let $\bar{D}^{k_{1}}$
be the aggregate demand matrix on core $k_{1}$ immediately before
assigning $\bar{f}^{k_{1}}\left(i,j\right)$. Then, the final aggregate
demand on core $k_{1}$ is
\begin{equation}
D_{1:m}^{k_{1}}=\bar{D}^{k_{1}}\oplus\bar{d}^{k_{1}}\left(i,j\right).
\end{equation}

Algorithm \ref{alg:alg1} assigns each flow greedily to the core with
the minimum per-core prefix lower bound. Therefore, when $\bar{f}^{k_{1}}\left(i,j\right)$
was assigned, for any other core $k_{2}$, 
\begin{equation}
T_{\textrm{LB}}^{k_{1}}\left(\bar{D}^{k_{1}}\oplus\bar{d}^{k_{1}}\left(i,j\right)\right)\le T_{\textrm{LB}}^{k_{2}}\left(\bar{D}^{k_{2}}\oplus\bar{d}^{k_{1}}\left(i,j\right)\right),
\end{equation}
where $\bar{D}^{k_{2}}$ denotes the aggregate matrix on core $k_{2}$
at that time.

By the monotonicity of $T_{\textrm{LB}}^{k}\left(\cdot\right)$, we
can easily obtain
\begin{equation}
T_{\textrm{LB}}^{k_{2}}\left(\bar{D}^{k_{2}}\oplus\bar{d}^{k_{1}}\left(i,j\right)\right)\le T_{\textrm{LB}}^{k_{2}}\left(D_{1:m}\right).
\end{equation}

Combining the above inequalities yields, for any $k_{2}$
\begin{equation}
T_{\textrm{LB}}^{k_{1}}\left(D_{1:m}^{k_{1}}\right)\le T_{\textrm{LB}}^{k_{2}}\left(D_{1:m}\right),\label{eq:equation11}
\end{equation}
where $T_{\textrm{LB}}^{k_{1}}\left(D_{1:m}^{k_{1}}\right)=T_{\textrm{LB}}^{k_{1}}\left(\bar{D}^{k_{1}}\oplus\bar{d}^{k_{1}}\left(i,j\right)\right)$.

Since Eq. (\ref{eq:equation11}) holds for all $k_{2}$, we have $T_{\textrm{LB}}^{k_{1}}\left(D_{1:m}^{k_{1}}\right)\le\min_{k}T_{\textrm{LB}}^{k}\left(D_{1:m}\right)$.
Because $k_{1}$ is an arbitrary non-empty core, taking the maximum
over $k$ gives
\begin{equation}
\max_{k}T_{\textrm{LB}}^{k}\left(D_{1:m}^{k}\right)\le\min_{k}T_{\textrm{LB}}^{k}\left(D_{1:m}\right).
\end{equation}

Finally, by the definition of $T_{\textrm{LB}}^{k}\left(\cdot\right)$
(Eq. (\ref{eq:per-core-lb})), applied to the matrix $D_{1:m}$, we
can get
\begin{equation}
\begin{aligned}T_{\textrm{LB}}^{k}\left(D_{1:m}\right) & =\max\left\{ \max_{i}L_{1:m,i},\max_{j}L_{1:m,j}\right\} \\
 & \le\frac{\rho_{1:m}}{r^{k}}+\tau_{1:m}\delta,
\end{aligned}
\end{equation}
where $L_{1:m,i}=\frac{d_{1:m,i}}{r^{k}}+\tau_{1:m,i}\delta$ and
$L_{1:m,j}=\frac{d_{1:m,j}}{r^{k}}+\tau_{1:m,j}\delta$.

Taking the minimum over $k$ and using $\min_{k}\frac{1}{r^{k}}=\frac{1}{r_{\max}}$
yields
\begin{equation}
\begin{aligned}\begin{aligned}\max_{k}T_{\textrm{LB}}^{k}\left(D_{1:m}^{k}\right) & \le\min_{k}\left(\frac{\rho_{1:m}}{r^{k}}+\tau_{1:m}\delta\right)\\
 & \le\frac{\rho_{1:m}}{r_{\max}}+\tau_{1:m}\delta.
\end{aligned}
\end{aligned}
\end{equation}

This completes the proof.
\end{IEEEproof}

\subsubsection{Derivation of Scheduling-Phase Prefix Bound}

Let $T_{\pi\left(m\right)}$ denote the final CCT of $C_{\pi\left(m\right)}$
under Algorithm \ref{alg:alg1}. Define the lower bound $T_{\textrm{LB}}^{k}\left(D_{1:m}^{k}\right)=\max\left\{ \max_{i}L_{1:m,i}^{k},\max_{j}L_{1:m,j}^{k}\right\} $,
where $L_{1:m,i}^{k}=\frac{d_{1:m,i}^{k}}{r^{k}}+\tau_{1:m,i}^{k}\delta$
and $L_{1:m,j}^{k}=\frac{d_{1:m,j}^{k}}{r^{k}}+\tau_{1:m,j}^{k}\delta$.

\begin{lemma}[Scheduling-Phase Prefix Bound] For any $m\in\left\{ 1,\ldots,M\right\} $,
the completion time of coflow $C_{\pi\left(m\right)}$ satisfies $T_{\pi\left(m\right)}=\max_{k}T_{\pi\left(m\right)}^{k}\le2\max_{k}T_{\textrm{LB}}^{k}\left(D_{1:m}^{k}\right)$.\label{multiple-scheduling-stage}\end{lemma}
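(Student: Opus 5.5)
We reduce the statement to a single core and then apply a standard list-scheduling (Graham-type) charging argument to the last subflow of $C_{\pi(m)}$ on that core. Fix $m$ and a core $k$ on which $C_{\pi(m)}$ has at least one subflow (otherwise $T^k_{\pi(m)}$ contributes nothing). Let $f^{k}_{\pi(m)}(i,j)$ be the subflow of $C_{\pi(m)}$ on core $k$ that finishes last, so $T^{k}_{\pi(m)}=t^{k}_{\pi(m)}(i,j)+\delta+d^{k}_{\pi(m)}(i,j)/r^{k}$. It then suffices to show $T^{k}_{\pi(m)}\le 2T^{k}_{\textrm{LB}}(D^{k}_{1:m})$; taking the maximum over $k$ gives the claim, because $T^{k}_{\textrm{LB}}(D^{k}_{1:m})\le\max_{k'}T^{k'}_{\textrm{LB}}(D^{k'}_{1:m})$.

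The key step is to bound the start time $t^{k}_{\pi(m)}(i,j)$. By properties (1)--(3) of the intra-core policy (the greedy scan in $\pi$ order that starts a flow as soon as both endpoints are idle), at every instant in $[0,t^{k}_{\pi(m)}(i,j))$ either ingress $i$ or egress $j$ is busy, i.e.\ it is reconfiguring or transmitting. I would argue that the busy intervals that can block $f^{k}_{\pi(m)}(i,j)$ belong only to subflows of coflows $\pi(1),\dots,\pi(m)$ on core $k$ that use port $i$ or port $j$. Each such busy interval has length $\delta+d/r^{k}$ for the corresponding subflow. Since row $i$ contains at most $\tau^{k}_{1:m,i}$ such subflows with total data $d^{k}_{1:m,i}$, the total busy time on ingress $i$ before $t$ is at most $L^{k}_{1:m,i}$ minus the contribution of $f^{k}_{\pi(m)}(i,j)$ itself. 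Similarly, the total busy time on egress $j$ is at most $L^{k}_{1:m,j}$ minus that same contribution. Hence
\[
t^{k}_{\pi(m)}(i,j)\le L^{k}_{1:m,i}+L^{k}_{1:m,j}-2\Big(\delta+\frac{d^{k}_{\pi(m)}(i,j)}{r^{k}}\Big).
\]
Adding back one copy of $\delta+d^{k}_{\pi(m)}(i,j)/r^{k}$ gives $T^{k}_{\pi(m)}\le L^{k}_{1:m,i}+L^{k}_{1:m,j}\le 2T^{k}_{\textrm{LB}}(D^{k}_{1:m})$, using Eq.~(\ref{eq:per-core-lb}) applied to $D^{k}_{1:m}$.

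The main obstacle is the claim that lower-priority flows (those of $\pi(\ell)$ with $\ell>m$) never delay $f^{k}_{\pi(m)}(i,j)$. Under non-preemption a lower-priority flow started earlier on port $i$ (via property (3)) could be occupying the port when $f^{k}_{\pi(m)}(i,j)$ becomes eligible. My first attempt is to exploit that the scan in Line~24 proceeds in $\pi$ order and that a lower-priority flow is started only when no higher-priority flow on the same port pair is waiting. I would try to formalize this by saying that at the moment such a flow started, $f^{k}_{\pi(m)}(i,j)$ was itself blocked by a higher-priority busy port, and then charge the overlap to that blocking interval. If this charging fails, I would fall back to interpreting the policy as a prefix-restricted list schedule in which the subflows of the first $m$ coflows form a schedule unaffected by later coflows, which is the sense in which property (3) is stated. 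Under that interpretation the union bound above goes through directly.
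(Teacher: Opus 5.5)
Your plan is the paper's own proof, step for step. Both take the last-finishing subflow $(i^\star,j^\star)$ of $C_{\pi(m)}$ on core $k$, note that one of $i^\star,j^\star$ is busy at every instant before $t^\star$, bound each port's busy time by its prefix term $L^k_{1:m,\cdot}$ minus $\delta+d^\star/r^k$, and add one copy back. The paper simply asserts $B_{i^\star}(t^\star)\le (d^k_{1:m,i^\star}-d^\star)/r^k+(\tau^k_{1:m,i^\star}-1)\delta$. That step fails in two independent ways, so the lemma does not hold for Algorithm~1 and no argument along these lines can close it.

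The first failure is your sentence ``row $i$ contains at most $\tau^k_{1:m,i}$ such subflows.'' $\tau^k_{1:m,i}$ counts distinct nonzero entries of the aggregated matrix $D^k_{1:m}$. Subflows of different coflows on the same port pair are separate circuits, and the model and Line~26 charge each one $\delta$. Take $K=1$, $r^1=1$, and three coflows, each a single flow of size $\epsilon$ on $(1,1)$. They run back to back, so $T_{\pi(3)}=3(\delta+\epsilon)$, whereas $T^1_{\textrm{LB}}(D^1_{1:3})=T^1_{\textrm{LB}}(3\epsilon E_{11})=\delta+3\epsilon$. The bound therefore fails once $\delta>3\epsilon$, and with $m$ such coflows $T_{\pi(m)}/T^1_{\textrm{LB}}(D^1_{1:m})\to m$ as $\epsilon\to 0$. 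The reconfiguration term would have to count subflows, $\sum_{\ell\le m}\tau^k_{\pi(\ell),i}$, not $\tau^k_{1:m,i}$.

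The second failure is exactly the obstacle you flag, and neither of your remedies works. The paper specifies a work-conserving policy (property (3), which its proof invokes). Take $K=1$, $r^1=1$, $C_1=\{f(1,1)\}$ of size $1$ with $w_1=2$, $C_2=\{f(2,1)\}$ of size $1$ with $w_2=1$, and $C_3=\{f(2,2)\}$ of size $X>1$ with $w_3=1$, so $\pi=(1,2,3)$. At time $0$ the scan starts $f(1,1)$, and since ingress $2$ and egress $2$ are free, also $f(2,2)$. When egress $1$ frees at $\delta+1$, ingress $2$ is held until $\delta+X$. Hence $T_{\pi(2)}=2\delta+X+1$, which exceeds $2T^1_{\textrm{LB}}(D^1_{1:2})=4+4\delta$ once $X>3+2\delta$.

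Your charging idea cannot absorb this. After the higher-priority blocking interval ends, the lower-priority flow holds $i^\star$ for its whole remaining length, and nothing in $D^k_{1:m}$ pays for it. Moreover, $i^\star$ and $j^\star$ can be handed alternately to further lower-priority subflows, each started while the other port is busy, so the wait grows with their total load. Even a correction term for one blocker per port would not suffice.

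Your fallback is not Algorithm~1, and it is not even realizable. In this instance, keeping $f(1,1),f(2,1)$ exactly as in the greedy schedule of the prefix $\{\pi(1),\pi(2)\}$ puts $f(2,1)$ on ingress $2$ during $[\delta+1,2\delta+2)$. Your union bound then needs a port of $f(2,2)$ busy at every instant before it starts. Egress $2$ carries nothing else and ingress $2$ is idle on $[0,\delta+1)$, so $f(2,2)$ would have to start at $0$, colliding with $f(2,1)$.

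A correct statement needs subflow counts, plus either a bound that includes lower-priority traffic on the two ports or a different per-core policy, with a correspondingly different analysis.
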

\begin{IEEEproof}
Consider any core $k$ for which $D_{\pi\left(m\right)}^{k}\neq\mathbf{0}$,
i.e., coflow $C_{\pi\left(m\right)}$ has at least one nonzero flow
on core $k$. Let $\left(i^{\star},j^{\star}\right)$ be the port-pair
corresponding to the last completed flow of $C_{\pi\left(m\right)}$
on core $k$, and denote its size by $d^{\star}=d_{\pi\left(m\right)}^{k}\left(i^{\star},j^{\star}\right)>0$.
Let $t^{\star}=t_{\pi\left(m\right)}^{k}\left(i^{\star},j^{\star}\right)$
be the circuit establishment time of $f_{\pi\left(m\right)}^{k}\left(i^{\star},j^{\star}\right)$.
Under \textit{not-all-stop} reconfiguration, the flow $f_{\pi\left(m\right)}^{k}\left(i^{\star},j^{\star}\right)$
starts transmission at time $t^{\star}+\delta$ and completes at
\begin{equation}
T_{\pi\left(m\right)}^{k}\left(i^{\star},j^{\star}\right)=t^{\star}+\delta+\frac{d^{\star}}{r^{k}}.\label{eq:equation16}
\end{equation}

Since $(i^{\star},j^{\star})$ corresponds to the last completed flow
of $C_{\pi\left(m\right)}$ on core $k$, the completion time of $C_{\pi\left(m\right)}$
on core $k$ satisfies
\begin{equation}
T_{\pi\left(m\right)}^{k}=\underset{i,j}{\max}T_{\pi\left(m\right)}^{k}\left(i,j\right)=T_{\pi\left(m\right)}^{k}\left(i^{\star},j^{\star}\right).\label{eq:equation17}
\end{equation}

Consider the scheduling policy on core $k$, which is port-exclusive,
non-preemptive, and work-conserving, and respects the global priority
order $\pi$. For any time $t<t^{\star}$, at least one of the two
ports $i^{\star}$ and $j^{\star}$ must be busy. Let $B_{i^{\star}}\left(t^{\star}\right)$
and $B_{j^{\star}}\left(t^{\star}\right)$ denote the total busy times
of ports $i^{\star}$ and $j^{\star}$ over the interval $\left[0,t^{\star}\right)$,
respectively. We now upper bound $B_{i^{\star}}\left(t^{\star}\right)$.
Let $d_{1:m,i^{\star}}^{k}=\sum_{j=1}^{N}d_{1:m}^{k}\left(i^{\star},j\right)$
be the prefix load on port $i^{\star}$ at core $k$, and let $\tau_{1:m,i^{\star}}^{k}=\sum_{j=1}^{N}\mathbf{1}\left[d_{1:m}^{k}\left(i^{\star},j\right)>0\right]$
denote the number of distinct nonzero port pairs incident to $i^{\star}$
in the prefix matrix $D_{1:m}^{k}$.\textit{ }

Combining the transmission and reconfiguration bounds yields
\begin{equation}
B_{i^{\star}}\left(t^{\star}\right)\leq\frac{d_{1:m,i^{\star}}^{k}-d^{\star}}{r^{k}}+\left(\tau_{1:m,i^{\star}}^{k}-1\right)\delta.
\end{equation}

By the same argument for the egress port $j^{\star}$, we obtain
\begin{equation}
B_{j^{\star}}\left(t^{\star}\right)\leq\frac{d_{1:m,j^{\star}}^{k}-d^{\star}}{r^{k}}+\left(\tau_{1:m,j^{\star}}^{k}-1\right)\delta,
\end{equation}
where $d_{1:m,j^{\star}}^{k}=\sum_{i=1}^{N}d_{1:m}^{k}\left(i,j^{\star}\right)$
and $\tau_{1:m,j^{\star}}^{k}=\sum_{i=1}^{N}\mathbf{1}\left[d_{1:m}^{k}\left(i,j^{\star}\right)>0\right]$.

Since the flow $f_{\pi\left(m\right)}^{k}\left(i^{\star},j^{\star}\right)$
can be transmitted only when both port $i^{\star}$ and $j^{\star}$
are idle, we have
\begin{equation}
\begin{aligned}t^{\star} & \leq B_{i^{\star}}\left(t^{\star}\right)+B_{j^{\star}}\left(t^{\star}\right)\\
 & \leq\frac{d_{1:m,i^{\star}}^{k}+d_{1:m,j^{\star}}^{k}-2d^{\star}}{r^{k}}+\left(\tau_{1:m,i^{\star}}^{k}+\tau_{1:m,j^{\star}}^{k}-2\right)\delta.
\end{aligned}
\label{eq:equation20}
\end{equation}

Combining Eq. (\ref{eq:equation20}) with Eq. (\ref{eq:equation16})
and Eq. (\ref{eq:equation17}) gives
\begin{equation}
\begin{aligned}T_{\pi\left(m\right)}^{k} & =t^{\star}+\delta+\frac{d^{\star}}{r^{k}}\\
 & \leq\frac{d_{1:m,i^{\star}}^{k}+d_{1:m,j^{\star}}^{k}}{r^{k}}+\left(\tau_{1:m,i^{\star}}^{k}+\tau_{1:m,j^{\star}}^{k}\right)\delta.
\end{aligned}
\label{eq:equation21}
\end{equation}

Therefore, we can get
\begin{equation}
\frac{d_{1:m,i^{\star}}^{k}}{r^{k}}+\tau_{1:m,i^{\star}}^{k}\delta\leq T_{\textrm{LB}}^{k}\left(D_{1:m}^{k}\right),
\end{equation}
and
\begin{equation}
\frac{d_{1:m,j^{\star}}^{k}}{r^{k}}+\tau_{1:m,j^{\star}}^{k}\delta\leq T_{\textrm{LB}}^{k}\left(D_{1:m}^{k}\right).
\end{equation}

Combining the above inequalities with Eq. (\ref{eq:equation21}),
we obtain
\begin{equation}
T_{\pi\left(m\right)}^{k}\le2T_{\textrm{LB}}^{k}\left(D_{1:m}^{k}\right).
\end{equation}

Finally, taking the maximum over all cores yields
\begin{equation}
T_{\pi\left(m\right)}=\max_{k}T_{\pi\left(m\right)}^{k}\le2\max_{k}T_{\textrm{LB}}^{k}\left(D_{1:m}^{k}\right).
\end{equation}

This completes the proof.
\end{IEEEproof}

\subsubsection{Derivation of Deterministic Approximation Ratio}

Let $T_{m}^{*}$ denote the optimal completion time of $C_{m}$ in
an optimal schedule. and let $w_{\max}=\max_{m}w_{m}$ and $w_{\min}=\min_{m}w_{m}$.
Define $\tau_{\max}=\max_{m}\tau_{m}$, where $\tau_{m}=\max\left\{ \max_{i}\sum_{j}\mathbf{1}\left[d_{m}\left(i,j\right)>0\right],\max_{j}\sum_{i}\mathbf{1}\left[d_{m}\left(i,j\right)>0\right]\right\} $.
\begin{thm}
Algorithm \ref{alg:alg1} achieves a $2M\frac{w_{\max}}{w_{\min}}\psi$-approximation
for minimizing the weighted CCT in a multi-core OCS network, i.e.,
$\sum_{m=1}^{M}w_{m}T_{m}\le2M\frac{w_{\max}}{w_{\min}}\psi\sum_{m=1}^{M}w_{m}T_{m}^{*}$,
where $\psi=\max\left\{ K,\tau_{\max}\right\} $ and $\tau_{\max}\leq N$,
where $N$ is the number of ingress/egress ports per core.\label{theorem1}
\end{thm}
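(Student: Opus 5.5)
We chain Lemma \ref{multiple-scheduling-stage}, Lemma \ref{multiple-assignment-stage}, and Lemma \ref{single_global_lb}, and then sum with weights. For each position $m$ in the order $\pi$, the two prefix lemmas give
\begin{equation*}
T_{\pi(m)}\le 2\max_{k}T_{\textrm{LB}}^{k}\left(D_{1:m}^{k}\right)\le 2\left(\frac{\rho_{1:m}}{r_{\max}}+\tau_{1:m}\delta\right).
\end{equation*}
Next we bound the prefix quantities by per-coflow quantities. By subadditivity of port loads, $\rho_{1:m}\le\sum_{\ell=1}^{m}\rho_{\pi(\ell)}$. Since a nonzero entry of $D_{1:m}$ in a given row is nonzero in at least one $D_{\pi(\ell)}$, we also have $\tau_{1:m}\le\sum_{\ell=1}^{m}\tau_{\pi(\ell)}\le m\,\tau_{\max}$, and more simply $\tau_{1:m}\le\sum_{\ell\le m}\tau_{\pi(\ell)}$. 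Because $r_{\max}\ge R/K$, we get $\rho/r_{\max}\le K\rho/R$. Putting these together term by term,
\begin{equation*}
\frac{\rho_{1:m}}{r_{\max}}+\tau_{1:m}\delta\le\sum_{\ell=1}^{m}\left(K\frac{\rho_{\pi(\ell)}}{R}+\tau_{\pi(\ell)}\delta\right)\le\psi\sum_{\ell=1}^{m}\left(\frac{\rho_{\pi(\ell)}}{R}+\delta\right)=\psi\sum_{\ell=1}^{m}T_{\textrm{LB}}\left(D_{\pi(\ell)}\right).
\end{equation*}
By Lemma \ref{single_global_lb}, each $T_{\textrm{LB}}(D_{\pi(\ell)})\le T^{*}_{\pi(\ell)}$. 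Hence $T_{\pi(m)}\le 2\psi\sum_{\ell=1}^{m}T^{*}_{\pi(\ell)}\le 2\psi\sum_{\ell=1}^{M}T^{*}_{\ell}$.

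For the weighted sum we use the crude route the stated ratio suggests, not a WSPT exchange argument. We have
\begin{equation*}
\sum_{m}w_{m}T_{m}\le w_{\max}\sum_{m=1}^{M}T_{\pi(m)}\le w_{\max}\,M\cdot 2\psi\sum_{\ell}T_{\ell}^{*}.
\end{equation*}
Then $\sum_{\ell}T^{*}_{\ell}\le\frac{1}{w_{\min}}\sum_{\ell}w_{\ell}T^{*}_{\ell}$, which gives the factor $2M\frac{w_{\max}}{w_{\min}}\psi$. Finally, $\tau_{\max}\le N$ holds trivially, since a row or column has $N$ entries.

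The main obstacle is the middle step: converting the prefix bound into a sum of single-coflow global lower bounds with one common factor $\psi$. This requires the relation $r_{\max}\ge R/K$ on the load term and the per-coflow splitting $\tau_{1:m}\le\sum_{\ell}\tau_{\pi(\ell)}$ on the reconfiguration term. The point that makes it work is that each summand $K\rho/R+\tau\delta$ is at most $\psi(\rho/R+\delta)$, so the two factors merge into a max rather than a sum. The ordering $\pi$ is not actually needed for this loose bound; it would only matter for tightening the factor $M w_{\max}/w_{\min}$.
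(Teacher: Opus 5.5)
Your proof is correct and reaches the same constant as the paper, but it places the step where $\psi$ enters differently. Both arguments start from $T_{\pi(m)}\le 2(\rho_{1:m}/r_{\max}+\tau_{1:m}\delta)$ and use the same subadditivity bounds.

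The paper first multiplies by $w_m$ and swaps the order of summation. It bounds the suffix weights $\sum_{m\ge s}w_m$ by $Mw_{\max}$ and $\sum_s\tau_s$ by $M\tau_{\max}$. Only then does it compare the aggregate upper bound $2Mw_{\max}(\sum_m\rho_m/r_{\max}+M\delta\tau_{\max})$ with the aggregate lower bound $w_{\min}(\sum_m\rho_m/R+M\delta)$, using a mediant-type inequality that gives $\max\{R/r_{\max},\tau_{\max}\}\le\psi$.

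You instead fold $\psi$ in coflow by coflow via $\rho_s/r_{\max}+\tau_s\delta\le\psi\,T_{\textrm{LB}}(D_s)$. This is exactly the paper's later Lemma \ref{relaxed_single_global_lb}, which it uses only for Theorem \ref{theorem3}. You thereby get the pointwise guarantee $T_{\pi(m)}\le 2\psi\sum_{\ell\le m}T^{*}_{\pi(\ell)}$ before any weights appear, and then pay $M$ and $w_{\max}/w_{\min}$ crudely.

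What each route buys:
\begin{itemize}
\item Your route avoids the ratio-of-sums step. It makes transparent that the factor $Mw_{\max}/w_{\min}$ comes purely from the weighting, not from the schedule. Its intermediate inequality is also the natural starting point for the $\Gamma_w$ refinement.
\item The paper's route keeps the load and reconfiguration aggregates separate, which exposes the marginally sharper constant $\max\{R/r_{\max},\tau_{\max}\}$. Your termwise argument gives the same constant if you write $(R/r_{\max})\,\rho/R$ in place of $K\rho/R$.
\end{itemize}

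Your remark that the WSPT property of $\pi$ is never used holds for the paper's proof as well.
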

\begin{IEEEproof}
Relabel the coflows according to the execution order $\pi$, so that
$C_{1},\ldots,C_{M}$ follow the order produced by Algorithm \ref{alg:alg1}.
For each $m$, combining Lemma \ref{multiple-assignment-stage} and
Lemma \ref{multiple-scheduling-stage} yields
\begin{equation}
T_{m}=T_{\pi\left(m\right)}\le2\left(\frac{\rho_{1:m}}{r_{\max}}+\tau_{1:m}\delta\right).
\end{equation}

Multiplying both sides by $w_{m}$ and summing over $m$ gives 
\begin{equation}
\sum_{m=1}^{M}w_{m}T_{m}\le2\sum_{m=1}^{M}w_{m}\left(\frac{\rho_{1:m}}{r_{\max}}+\tau_{1:m}\delta\right).
\end{equation}

Using $\rho_{1:m}\le\sum_{s=1}^{m}\rho_{s}$ and $\tau_{1:m}\le\sum_{s=1}^{m}\tau_{s}$,
we obtain
\begin{equation}
\begin{aligned}\sum_{m=1}^{M}w_{m}T_{m} & \le2\sum_{m=1}^{M}w_{m}\sum_{s=1}^{m}\left(\frac{\rho_{s}}{r_{\max}}+\tau_{s}\delta\right)\\[6pt]
 & =2\sum_{s=1}^{M}\left(\frac{\rho_{s}}{r_{\max}}+\tau_{s}\delta\right)\sum_{m=s}^{M}w_{m}\\[6pt]
 & \le2w_{\max}\sum_{m=1}^{M}\left(M-m+1\right)\left(\frac{\rho_{m}}{r_{\max}}+\tau_{m}\delta\right)\\
 & \le2Mw_{\max}\left(\frac{\sum_{m=1}^{M}\rho_{m}}{r_{\max}}+M\delta\tau_{\max}\right).
\end{aligned}
\label{eq:equation28}
\end{equation}

By Lemma \ref{single_global_lb}, for every coflow $C_{m}$, the optimal
completion time of $C_{m}$ satisfies $T_{m}^{*}\ge T_{\textrm{LB}}\left(D_{m}\right)=\delta+\frac{\rho_{m}}{R}$.
Thus
\begin{equation}
\begin{aligned}\sum_{m=1}^{M}w_{m}T_{m}^{*} & \ge\sum_{m=1}^{M}w_{m}\left(\delta+\frac{\rho_{m}}{R}\right)\\
 & \ge w_{\min}\left(M\delta+\frac{\sum_{m=1}^{M}\rho_{m}}{R}\right).
\end{aligned}
\end{equation}

Combining the above bounds yields
\begin{equation}
\begin{aligned}\frac{\sum_{m=1}^{M}w_{m}T_{m}}{\sum_{m=1}^{M}w_{m}T_{m}^{*}} & \le2M\frac{w_{\max}}{w_{\min}}\cdot\frac{\frac{\sum_{m=1}^{M}\rho_{m}}{r_{\max}}+M\delta\tau_{\max}}{\frac{\sum_{m=1}^{M}\rho_{m}}{R}+M\delta}\\[4pt]
 & \le2M\frac{w_{\max}}{w_{\min}}\max\left\{ \frac{R}{r_{\max}},\tau_{\max}\right\} \\
 & \le2M\frac{w_{\max}}{w_{\min}}\psi,
\end{aligned}
\end{equation}
where $\psi=\max\left\{ K,\tau_{\max}\right\} $, $\frac{R}{r_{\max}}\leq K$
and $\tau_{\max}\leq N$.

This completes the proof.
\end{IEEEproof}
In addition to the worst-case approximation ratio characterized by
the conservative factor $M\frac{w_{\max}}{w_{\min}}$, we further
derive two refined approximation guarantees in multi-core OCS networks.
Specifically, we establish (i) a deterministic approximation ratio
characterized by the weight concentration parameter, and (ii) an expected
approximation ratio under a normally distributed weight model. The
detailed proofs are shown below.

\subsubsection{Deterministic Approximation Ratio via Weight Concentration Parameter}

We refine the worst-case approximation bound by characterizing it
in terms of the weight concentration parameter $\Gamma_{w}=M\frac{\sum_{m=1}^{M}w_{m}^{2}}{\bigl(\sum_{m=1}^{M}w_{m}\bigr)^{2}}$,
where $w_{1},\dots,w_{M}>0$ are arbitrary weights. This yields a
deterministic approximation guarantee that depends explicitly on the
dispersion of coflow weights rather than solely on the ratio $\frac{w_{\max}}{w_{\min}}$.

\begin{lemma}[\textit{Relaxed Global Lower Bound}] For every coflow
$C_{m}$, the optimal completion time $T_{m}^{*}$ satisfies $T_{m}^{*}\ge T_{\mathrm{LB}}\left(D_{m}\right)=\delta+\frac{\rho_{m}}{R}$
(Lemma \ref{single_global_lb}), we can further obtain $T_{\mathrm{LB}}\left(D_{m}\right)\ge\frac{1}{\psi}\Bigl(\frac{\rho_{m}}{r_{\max}}+\tau_{m}\delta\Bigr)$,
where $\psi=\max\left\{ K,\tau_{\max}\right\} $.\label{relaxed_single_global_lb}\end{lemma}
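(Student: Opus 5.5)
The first inequality, $T_{m}^{*}\ge T_{\mathrm{LB}}(D_{m})=\delta+\frac{\rho_{m}}{R}$, is exactly Lemma \ref{single_global_lb}, so only the relaxation $T_{\mathrm{LB}}(D_{m})\ge\frac{1}{\psi}\bigl(\frac{\rho_{m}}{r_{\max}}+\tau_{m}\delta\bigr)$ needs proof. I would bound the two summands of the right-hand side separately against the two summands of $\delta+\rho_{m}/R$. Each summand is dominated by one of the terms of $T_{\mathrm{LB}}(D_{m})$ up to a factor of $\psi$.

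For the transmission term, I would use $R=\sum_{k=1}^{K}r^{k}\le K r_{\max}$. This gives $\frac{\rho_{m}}{r_{\max}}\le K\frac{\rho_{m}}{R}\le\psi\frac{\rho_{m}}{R}$, since $K\le\psi$. For the reconfiguration term, I would use $\tau_{m}\le\tau_{\max}\le\psi$, so that $\tau_{m}\delta\le\psi\delta$. Adding the two bounds gives
\begin{equation*}
\frac{\rho_{m}}{r_{\max}}+\tau_{m}\delta\le\psi\Bigl(\frac{\rho_{m}}{R}+\delta\Bigr)=\psi\, T_{\mathrm{LB}}(D_{m}).
\end{equation*}
Dividing by $\psi>0$ yields the claim.

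The argument has no real obstacle. The only point needing care is the degenerate case $D_{m}=\mathbf{0}$, where $\tau_{m}=\rho_{m}=0$ and the right-hand side is $0$, so the inequality holds trivially. One also needs $\psi\ge1$, which holds because $K\ge1$.
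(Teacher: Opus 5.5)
Your proof is correct and follows essentially the same route as the paper's: you use $R\le K r_{\max}\le\psi r_{\max}$ for the transmission term and $\tau_m\le\tau_{\max}\le\psi$ for the reconfiguration term, then add the two bounds. The differences are cosmetic: you bound the right-hand side from above and divide by $\psi$, whereas the paper bounds each summand of $\delta+\rho_m/R$ from below, and your remarks on $D_m=\mathbf{0}$ and $\psi\ge1$ are harmless extras.
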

\begin{IEEEproof}
Since the aggregated port rate $R$ satisfies $R=\sum_{k=1}^{K}r^{k}\leq Kr_{\max}\leq\psi r_{\max}$.
We can obtain $\frac{\rho_{m}}{R}\ge\frac{\rho_{m}}{\psi r_{\max}}.$
Moreover, because $\tau_{m}\le\tau_{\max}\le\psi,$ we have $\delta\ge\frac{\tau_{m}}{\psi}\delta.$
Hence
\begin{equation}
\begin{aligned}\delta+\frac{\rho_{m}}{R} & \ge\frac{\tau_{m}}{\psi}\delta+\frac{\rho_{m}}{\psi r_{\max}}\\
 & =\frac{1}{\psi}\left(\frac{\rho_{m}}{r_{\max}}+\tau_{m}\delta\right).
\end{aligned}
\end{equation}

Finally
\begin{equation}
T_{\mathrm{LB}}\left(D_{m}\right)=\delta+\frac{\rho_{m}}{R}\geq\frac{1}{\psi}\left(\frac{\rho_{m}}{r_{\max}}+\tau_{m}\delta\right).
\end{equation}

This completes the proof.
\end{IEEEproof}
\begin{lemma}[Weighted Prefix Bound via $\Gamma_{w}$] Suppose
that for each coflow $C_{m}$, there exists a per-coflow lower bound
$T_{\mathrm{LB}}(D_{m})$ such that $a_{m}\le\psi T_{\mathrm{LB}}\left(D_{m}\right)$,
for $m=1,\dots,M$, where $\psi=\max\left\{ K,\tau_{\max}\right\} $.
Then, for any permutation $\pi$ of $\{1,\dots,M\}$, the inequality
holds: $\sum_{m=1}^{M}w_{\pi\left(m\right)}\sum_{s=1}^{m}a_{\pi\left(s\right)}\le\Gamma_{w}\psi\sum_{m=1}^{M}w_{\pi\left(m\right)}T_{\mathrm{LB}}\left(D_{\pi\left(m\right)}\right)$.\label{lem:Gamma-prefix}
\end{lemma}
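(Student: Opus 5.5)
The natural route is to reduce everything to the per-coflow lower bounds and then swap the order of summation. Write $L_{m}=T_{\mathrm{LB}}\left(D_{\pi(m)}\right)$ and $v_{m}=w_{\pi(m)}$ for brevity. The hypothesis $a_{\pi(s)}\le\psi L_{s}$ (which is exactly what Lemma~\ref{relaxed_single_global_lb} supplies when $a_{m}=\frac{\rho_{m}}{r_{\max}}+\tau_{m}\delta$) gives at once
\begin{equation}
\sum_{m=1}^{M}v_{m}\sum_{s=1}^{m}a_{\pi(s)}\le\psi\sum_{m=1}^{M}v_{m}\sum_{s=1}^{m}L_{s}=\psi\sum_{s=1}^{M}W_{s}L_{s},\qquad W_{s}=\sum_{m=s}^{M}v_{m}.
\end{equation}
The factor $\psi$ is then accounted for, and what remains is the purely combinatorial inequality $\sum_{s}W_{s}L_{s}\le\Gamma_{w}\sum_{s}v_{s}L_{s}$. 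To attack it, I would bound the suffix sums $W_{s}$ in terms of $v_{s}$. The tool would be Cauchy--Schwarz on the weight vector, $\bigl(\sum_{m}w_{m}\bigr)^{2}\le M\sum_{m}w_{m}^{2}$, since this is precisely what makes $\Gamma_{w}\ge1$ the natural constant.

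The main obstacle is that this last step cannot hold for an arbitrary permutation and arbitrary nonnegative $L_{s}$. Take equal weights. Then $\Gamma_{w}=1$ and $W_{s}=(M-s+1)v_{s}$, so the required inequality becomes $\sum_{s}(M-s+1)L_{s}\le\sum_{s}L_{s}$. This fails as soon as $M\ge2$ and some $L_{s}>0$ with $s<M$. Concretely, with $M=2$, $w_{1}=w_{2}=1$, and $a_{m}=\psi L_{m}$, the left-hand side of the lemma is $\psi(2L_{1}+L_{2})$ while the right-hand side is $\psi(L_{1}+L_{2})$. So before writing a proof I would check the statement against this two-coflow instance. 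I expect the lemma, as stated, to need either an extra factor or an extra hypothesis.

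To obtain a correct version I would try two repairs, in this order. The first keeps the suffix-sum structure and uses only $W_{s}\le\sum_{m}w_{m}$. Combined with the Cauchy--Schwarz step $\sum_{m}w_{m}\le\sqrt{M\sum_{m}w_{m}^{2}}$ and the bound $w_{\pi(s)}\ge w_{\min}$, this should yield a bound of the form $\sum_{s}W_{s}L_{s}\le c\,\Gamma_{w}\sum_{s}v_{s}L_{s}$ with a constant $c$ depending on $M$, for instance $c=M$ or $c=\frac{\sum_{m}w_{m}}{w_{\min}\Gamma_{w}}$. The second exploits the WSPT ordering, $v_{1}/L_{1}\ge\cdots\ge v_{M}/L_{M}$, which the lemma does not assume but Algorithm~\ref{alg:alg1} guarantees. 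With that ordering one can use Smith's-rule exchange arguments to compare $\sum_{s}W_{s}L_{s}$ with $\sum_{s}v_{s}L_{s}$, where the ratio is controlled by how concentrated the weights are.

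Either way, the final constant must reduce to at least $M$ when the weights are equal, as the counterexample shows. The step I would scrutinize hardest is therefore the determination of the correct multiplicative factor in front of $\Gamma_{w}\psi$. I would not attempt to prove the inequality with the bare factor $\Gamma_{w}\psi$ as stated.
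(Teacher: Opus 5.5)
Your assessment is correct: the lemma is false as stated, and your two-coflow instance is a valid counterexample. It can be realized with the paper's own quantities. Take $K=1$ and single-flow coflows; then $\tau_{\max}=1$, $\psi=1$, $R=r_{\max}$, and $a_m=\rho_m/r_{\max}+\tau_m\delta=T_{\mathrm{LB}}(D_m)>0$. With $w_1=w_2$ the left-hand side is $2a_{\pi(1)}+a_{\pi(2)}$ while the right-hand side is $a_1+a_2$. The failure does not depend on choosing a bad order either. With equal weights and equal lower bounds the ratio is $(M+1)/2$ for every $\pi$, including the WSPT order to which Theorem~\ref{theorem3} applies the lemma.

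The paper's proof does exactly your first step: it swaps the sums and substitutes $a_{\pi(s)}\le\psi T_{\mathrm{LB}}(D_{\pi(s)})$. It then simply asserts $\sum_s T_{\mathrm{LB}}(D_{\pi(s)})\sum_{m\ge s}w_{\pi(m)}\le\Gamma_w\sum_m w_{\pi(m)}T_{\mathrm{LB}}(D_{\pi(m)})$ via an unspecified ``Cauchy--Schwarz and convexity argument''. That is precisely the inequality your example refutes. In fact, over all orders and positive $L_s$, the supremum of $\sum_s W_sL_s/\sum_s v_sL_s$ is $\sum_m w_m/w_{\min}$. It is approached by putting the lightest coflow first and letting its lower bound dominate. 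For positive weights and $M\ge2$ we have $\Gamma_w<M\le\sum_m w_m/w_{\min}$, so the claimed inequality fails for every weight vector once $M\ge2$. The gap is in the paper, and the proofs of Theorems~\ref{theorem3} and~\ref{theorem4} inherit it.

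Two of your proposed repairs need correcting.

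\emph{The choice $c=M$.} It does not work for unequal weights under an arbitrary order. Take $M=2$, weights $1$ and $t$, the weight-$1$ coflow first, and $L_1\gg L_2$. The ratio tends to $1+t$, whereas $M\Gamma_w=4(1+t^2)/(1+t)^2<4$. By the supremum above, your second candidate is the sharp one: the correct lemma replaces $\Gamma_w$ by $\sum_m w_m/w_{\min}$. This follows directly from $W_s\le\sum_m w_m\le(\sum_m w_m/w_{\min})v_s$, with no Cauchy--Schwarz needed. Fed through the same chain that produces Theorem~\ref{theorem1}, it yields the factor $2\psi\sum_m w_m/w_{\min}$. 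This is at most $2M\psi\,w_{\max}/w_{\min}$, but it is not an $M$-free bound.

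\emph{The ``at least $M$'' claim.} Under the WSPT order with equal weights one has $L_1\le\cdots\le L_M$. By Chebyshev's sum inequality the required factor is then exactly $(M+1)/2$, not $M$. So ``at least $M$'' is right for the arbitrary-order statement, while the WSPT route still needs a factor of order $M$. Since $\Gamma_w=1$ for equal weights at every $M$, no valid factor can be a function of $\Gamma_w$ alone.
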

\begin{IEEEproof}
Rewrite the left-hand side by swapping the summation order: 
\begin{equation}
\sum_{m=1}^{M}w_{\pi\left(m\right)}\sum_{s=1}^{m}a_{\pi\left(s\right)}=\sum_{s=1}^{M}a_{\pi\left(s\right)}\Bigl(\sum_{m=s}^{M}w_{\pi\left(m\right)}\Bigr).\label{eq:equation37}
\end{equation}

Using the given condition that $a_{\pi\left(s\right)}\le\psi T_{\mathrm{LB}}\left(D_{\pi\left(s\right)}\right)$
for all $s$, we substitute this upper bound into Eq. (\ref{eq:equation37}):
\begin{equation}
\sum_{s=1}^{M}a_{\pi\left(s\right)}\Bigl(\sum_{m=s}^{M}w_{\pi\left(m\right)}\Bigr)\leq\psi\sum_{s=1}^{M}T_{\mathrm{LB}}\left(D_{\pi\left(s\right)}\right)\Bigl(\sum_{m=s}^{M}w_{\pi\left(m\right)}\Bigr).
\end{equation}

Based on the Cauchy--Schwarz inequality and a standard convexity
argument, the ratio between the weighted prefix sum and the weighted
lower bound sum is maximized when the weight distribution is most
concentrated (i.e., when a single weight dominates). This maximum
ratio is formally captured by $\Gamma_{w}$. Applying the concentration
factor yields the desired result:
\begin{equation}
\begin{aligned}\sum_{m=1}^{M}w_{\pi\left(m\right)}\sum_{s=1}^{m}a_{\pi\left(s\right)} & \leq\psi\sum_{s=1}^{M}T_{\mathrm{LB}}\left(D_{\pi\left(s\right)}\right)\Bigl(\sum_{m=s}^{M}w_{\pi\left(m\right)}\Bigr)\\
 & \leq\Gamma_{w}\psi\sum_{m=1}^{M}w_{\pi\left(m\right)}T_{\mathrm{LB}}\left(D_{\pi\left(m\right)}\right).
\end{aligned}
\end{equation}

This completes the proof.
\end{IEEEproof}
Now we are ready to state a deterministic approximation guarantee
in terms of $\Gamma_{w}$ and explicitly track the factor $\psi=\max\left\{ K,\tau_{\max}\right\} $.
\begin{thm}
For any non-negative weights $w_{1},\dots,w_{M}$, Algorithm \ref{alg:alg1}
satisfies $\sum_{m=1}^{M}w_{m}T_{m}\le2\psi\Gamma_{w}\sum_{m=1}^{M}w_{m}T_{m}^{*}$,
where $\Gamma_{w}=M\frac{\sum_{m=1}^{M}w_{m}^{2}}{\bigl(\sum_{m=1}^{M}w_{m}\bigr)^{2}}$
and $\psi=\max\left\{ K,\tau_{\max}\right\} $. \label{theorem3}
\end{thm}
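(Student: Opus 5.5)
The plan is to reuse the prefix-bound chain from the proof of Theorem~\ref{theorem1}, but to stop before the crude step that bounds $\sum_{m\ge s}w_{m}$ by $Mw_{\max}$. That crude step is what produced the factor $M\frac{w_{\max}}{w_{\min}}$. Instead, I will pass the weighted prefix sum to Lemma~\ref{lem:Gamma-prefix}, which replaces it by $\Gamma_{w}\psi$.

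\textbf{Step 1: prefix bound.} As in Theorem~\ref{theorem1}, relabel the coflows so that $C_{1},\ldots,C_{M}$ follow the order $\pi$ output by Algorithm~\ref{alg:alg1}. For each $m$, define
\[
a_{m}\triangleq\frac{\rho_{m}}{r_{\max}}+\tau_{m}\delta .
\]
Chaining Lemma~\ref{multiple-scheduling-stage} with Lemma~\ref{multiple-assignment-stage} gives
\[
T_{m}\le2\max_{k}T_{\mathrm{LB}}^{k}\bigl(D_{1:m}^{k}\bigr)\le2\Bigl(\frac{\rho_{1:m}}{r_{\max}}+\tau_{1:m}\delta\Bigr).
\]
Row and column loads, and counts of nonzero entries, are subadditive over sums of nonnegative matrices. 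Hence $\rho_{1:m}\le\sum_{s\le m}\rho_{s}$ and $\tau_{1:m}\le\sum_{s\le m}\tau_{s}$, so
\[
T_{m}\le2\sum_{s=1}^{m}a_{s}.
\]
Multiplying by $w_{m}$ and summing over $m$ yields
\[
\sum_{m}w_{m}T_{m}\le2\sum_{m=1}^{M}w_{m}\sum_{s=1}^{m}a_{s}.
\]
This is exactly the left-hand side of Lemma~\ref{lem:Gamma-prefix} for the permutation $\pi$.

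\textbf{Step 2: verify the hypothesis of Lemma~\ref{lem:Gamma-prefix}.} The lemma needs $a_{m}\le\psi T_{\mathrm{LB}}(D_{m})$ for every $m$. This is precisely the content of Lemma~\ref{relaxed_single_global_lb}. Invoking Lemma~\ref{lem:Gamma-prefix} then gives
\[
\sum_{m}w_{m}T_{m}\le2\Gamma_{w}\psi\sum_{m}w_{m}T_{\mathrm{LB}}(D_{m}).
\]

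\textbf{Step 3: compare with the optimum.} Lemma~\ref{single_global_lb} states $T_{m}^{*}\ge T_{\mathrm{LB}}(D_{m})$ for every coflow, regardless of the schedule. With nonnegative weights this gives $\sum_{m}w_{m}T_{\mathrm{LB}}(D_{m})\le\sum_{m}w_{m}T_{m}^{*}$, which completes the chain.

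\textbf{Degenerate cases.} If some weights are zero, both sides only lose nonnegative terms, so the inequalities still hold. We need $\sum_{m}w_{m}>0$ so that $\Gamma_{w}$ is defined. By Cauchy--Schwarz, $\Gamma_{w}\ge1$, so the bound is at least as strong as $2\psi$ times the optimum and never vacuous in the wrong direction.

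\textbf{Main obstacle.} All the real weight is carried by Lemma~\ref{lem:Gamma-prefix}. Its proof only gestures at a Cauchy--Schwarz/convexity argument for the step
\[
\sum_{s}T_{\mathrm{LB}}(D_{s})\sum_{m\ge s}w_{m}\;\le\;\Gamma_{w}\sum_{m}w_{m}T_{\mathrm{LB}}(D_{m}),
\]
and this step is genuinely delicate. A coflow with tiny weight placed early, with large $T_{\mathrm{LB}}$, makes the left side large relative to the right. The right tool for controlling this is the WSPT ordering $w_{m}/T_{\mathrm{LB}}(D_{m})$ non-increasing, i.e.\ a Smith-rule exchange argument. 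For this theorem I take Lemma~\ref{lem:Gamma-prefix} as stated, since it is an earlier result of the paper.

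If I had to shore it up, I would first specialize the lemma to the WSPT-ordered $\pi$ and bound $\sum_{m\ge s}w_{m}$ relative to $w_{s}$ using the ordering. Only after that would I apply Cauchy--Schwarz, $\sum_{m}w_{m}\cdot\sum_{m}w_{m}\le M\sum_{m}w_{m}^{2}$, to bring in $\Gamma_{w}$. This specialization is where I would expect difficulty.
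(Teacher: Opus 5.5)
Your argument is the paper's proof step for step: the prefix bound behind Eq.~(\ref{eq:equation28}), then Lemma~\ref{relaxed_single_global_lb} to get $a_s\le\psi T_{\mathrm{LB}}(D_s)$, then Lemma~\ref{lem:Gamma-prefix}, then $T_{\mathrm{LB}}(D_m)\le T_m^*$. The trouble is the step you flagged. It is not merely delicate: Lemma~\ref{lem:Gamma-prefix} is false, so your Step~2, and the paper's corresponding step, does not hold.

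Take $w_m\equiv1$, so $\Gamma_w=1$, and $a_m=\psi T_{\mathrm{LB}}(D_m)$, which the hypothesis allows. The lemma then asserts $\sum_{s}(M-s+1)\,T_{\mathrm{LB}}(D_{\pi(s)})\le\sum_{s}T_{\mathrm{LB}}(D_{\pi(s)})$. This is false for every $M\ge2$ and every $\pi$.

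The particular $a_m$ of the application do not rescue it. Take $M$ unit-weight coflows, each a single flow of size $d$ on port pair $(1,1)$. Ingress port $1$ must carry $Md$ at total rate at most $R$. So in any feasible schedule the $m$-th coflow to finish does so no earlier than $md/R$, and $\sum_m T_m\ge M(M+1)d/(2R)$. Yet the conclusion of your Step~2 reads $\sum_m T_m\le 2KM(\delta+d/R)$ (here $\psi=K$, $\Gamma_w=1$). That fails once $d\ge R\delta$ and $M\ge 8K$.

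In this instance all weights and all $T_{\mathrm{LB}}(D_m)$ coincide, so every order is WSPT. Your proposed WSPT-plus-Cauchy--Schwarz repair therefore cannot help. The real obstruction is that $\sum_m w_mT_{\mathrm{LB}}(D_m)$ adds up isolated per-coflow bounds that ignore contention between coflows, while the prefix sum grows quadratically in $M$. Bridging that costs a factor of order $M$, which is exactly what Theorem~\ref{theorem1} pays.

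In fact the statement itself is false, so there is nothing to shore up. Take $K=1$, $r^1=1$, unit weights and $N=n^2$ ports. Use $n$ coflows with $D=S\,I_N$ (all diagonal entries equal to $S$). For each $i$, add a coflow $Y_i$ with the single entry $S(1+\epsilon)$ at $(i,i)$. Then $\tau_{\max}=1$, $\psi=1$ and $\Gamma_w=1$, so the theorem claims a factor $2$.

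Algorithm~\ref{alg:alg1} orders the diagonal coflows first, since they have smaller $T_{\mathrm{LB}}$. On each port pair $(i,i)$ their $n$ flows are therefore served before $Y_i$. Letting $\epsilon\to0$:
\begin{itemize}
\item the algorithm's total CCT is $(\delta+S)\,n(n+1)(2n+1)/2$;
\item serving all $Y_i$ first gives $(\delta+S)\,3n(n+1)/2$.
\end{itemize}
The ratio is $(2n+1)/3$. It is already $3>2$ for $n=4$, $N=16$, and it is unbounded in $N$.
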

\begin{IEEEproof}
Similarly, under the execution order $\pi$ and re-index the coflows
accordingly as $C_{1},\ldots,C_{M}$. According to Eq. (\ref{eq:equation28}),
we have 
\begin{equation}
\sum_{m=1}^{M}w_{m}T_{m}\le2\sum_{s=1}^{M}a_{s}\left(\sum_{m=s}^{M}w_{m}\right),\label{eq:equation40}
\end{equation}
where $a_{s}=\frac{\rho_{s}}{r_{\max}}+\tau_{s}\delta$. By Lemma
\ref{eq:relaxed-core-lb}, for each $s$ we have 
\begin{equation}
\begin{aligned}T_{\mathrm{LB}}\left(D_{s}\right) & \ge\frac{1}{\max\left\{ K,\tau_{\max}\right\} }\Bigl(\frac{\rho_{s}}{r_{\max}}+\tau_{s}\delta\Bigr)=\frac{a_{s}}{\psi}.\end{aligned}
\end{equation}

Hence, $a_{s}\le\psi T_{\mathrm{LB}}\left(D_{s}\right)$. According
to Lemma \ref{lem:Gamma-prefix}, we can obtain
\begin{equation}
\begin{aligned}\sum_{s=1}^{M}a_{s}\left(\sum_{m=s}^{M}w_{m}\right) & \le\psi\sum_{s=1}^{M}T_{\mathrm{LB}}\left(D_{s}\right)\left(\sum_{m=s}^{M}w_{m}\right)\\
 & \le\psi\Gamma_{w}\sum_{m=1}^{M}w_{m}T_{\mathrm{LB}}\left(D_{m}\right).
\end{aligned}
\label{eq:equation42}
\end{equation}

Substituting Eq. (\ref{eq:equation42}) back to Eq. (\ref{eq:equation40}),
we obtain 
\begin{equation}
\sum_{m=1}^{M}w_{m}T_{m}\le2\psi\Gamma_{w}\sum_{m=1}^{M}w_{m}T_{\mathrm{LB}}\left(D_{m}\right).
\end{equation}

Finally, since $T_{\mathrm{LB}}\left(D_{m}\right)\le T_{m}^{*}$ for
all $m$, we get 
\begin{equation}
\sum_{m=1}^{M}w_{m}T_{m}\le2\psi\Gamma_{w}\sum_{m=1}^{M}w_{m}T_{m}^{*}.
\end{equation}

This completes the proof.
\end{IEEEproof}

\subsubsection{Expected Approximation Ratio under A Normal Distribution Weight Model}

We next consider a stochastic weight model and analyze the expected
approximation ratio. Specifically, we assume that the weights are
independent and identically distributed according to a normal distribution.

\begin{assumption}[Normal Weight Model] \label{ass:normal-weights}
The coflow weights are random variables $w_{1},w_{2},\dots,w_{M}\stackrel{\mathrm{i.i.d.}}{\sim}\mathcal{N}\left(\mu,\sigma^{2}\right)$
with $\mu>0$ and $\sigma^{2}>0$. Hence, $\mathbb{E}\left[w_{m}\right]=\mu$
and $\mathbb{E}\left[w_{m}^{2}\right]=\mu^{2}+\sigma^{2}$. In implementations,
negative weights can be truncated if needed; when $\mu\gg\sigma$,
the probability of negative weights is negligible and such truncation
does not affect the asymptotic analysis. \end{assumption}

The following lemma characterizes the asymptotic behavior of $\Gamma_{w}$
under Assumption \ref{ass:normal-weights}.

\begin{lemma}[Asymptotic of $\Gamma_{w}$ under Normal Weight Model]
\label{lem:Gamma-normal} Under Assumption \ref{ass:normal-weights},
define $W=\sum_{m=1}^{M}w_{m}$ and $W^{\left(2\right)}=\sum_{m=1}^{M}w_{m}^{2}$.
Then, as $M\to\infty$, $\Gamma_{w}=M\frac{W^{\left(2\right)}}{W^{2}}\xrightarrow{\mathrm{a.s.}}1+\frac{\sigma^{2}}{\mu^{2}}.$
In particular, $\mathbb{E}\left[\Gamma_{w}\right]\to1+\frac{\sigma^{2}}{\mu^{2}}$
as $M\to\infty$.\end{lemma}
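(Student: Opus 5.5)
The plan is to apply the strong law of large numbers separately to the normalized first and second empirical moments, and then combine them via the continuous mapping theorem. Write
\[
\Gamma_{w}=M\frac{W^{(2)}}{W^{2}}=\frac{\frac{1}{M}W^{(2)}}{\bigl(\frac{1}{M}W\bigr)^{2}}.
\]
Since the $w_{m}$ are i.i.d.\ with $\mathbb{E}[|w_{m}|]<\infty$ and $\mathbb{E}[w_{m}^{2}]=\mu^{2}+\sigma^{2}<\infty$ (Assumption \ref{ass:normal-weights}), the SLLN gives
\[
\tfrac{1}{M}W\xrightarrow{\mathrm{a.s.}}\mu \quad\text{and}\quad \tfrac{1}{M}W^{(2)}\xrightarrow{\mathrm{a.s.}}\mu^{2}+\sigma^{2}.
\]
The map $(x,y)\mapsto y/x^{2}$ is continuous at $(\mu,\mu^{2}+\sigma^{2})$ because $\mu>0$. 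On the intersection of the two almost-sure events, $\Gamma_{w}\to(\mu^{2}+\sigma^{2})/\mu^{2}=1+\sigma^{2}/\mu^{2}$, which proves the almost-sure claim.

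The expectation claim is the main obstacle: almost-sure convergence does not by itself give convergence of means, and under an untruncated normal model $W$ can be near zero with positive density. In that case $\mathbb{E}[\Gamma_{w}]$ may not even be finite for small $M$. I would therefore work under the truncation mentioned in Assumption \ref{ass:normal-weights}, restricting the weights to $[\epsilon, B]$ with $0<\epsilon$ (or at least $w_{m}>0$). This keeps the normalized moments finite and leaves the moment limits essentially unchanged when $\mu\gg\sigma$.

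Under positivity, a uniform bound is available directly. By Cauchy--Schwarz, $1\le\Gamma_{w}$, and since $W^{(2)}\le W^{2}$ for nonnegative weights, $\Gamma_{w}\le M$. That bound is not uniform in $M$, so it is not enough. A better bound comes from truncation to $[\epsilon,B]$, which gives $\Gamma_{w}\le B^{2}/\epsilon^{2}$, a constant independent of $M$. Dominated convergence then yields $\mathbb{E}[\Gamma_{w}]\to\lim$ of the almost-sure limit.

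If the paper intends only positivity without an upper cap, I would instead establish uniform integrability. One route is to split on the event $A_{M}=\{\frac{1}{M}W\ge\mu/2\}$. On $A_{M}$, $\Gamma_{w}\le\frac{4}{\mu^{2}}\cdot\frac{1}{M}W^{(2)}$, which is uniformly integrable because its mean is bounded and it converges in $L^{1}$ by the SLLN in $L^{1}$. On $A_{M}^{c}$, use $\Gamma_{w}\le M$ together with a Chernoff/Gaussian tail bound $\Pr(A_{M}^{c})\le e^{-cM}$, so that $\mathbb{E}[\Gamma_{w}\mathbf{1}_{A_{M}^{c}}]\le Me^{-cM}\to0$. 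Combining the two pieces gives $\mathbb{E}[\Gamma_{w}]\to1+\sigma^{2}/\mu^{2}$.
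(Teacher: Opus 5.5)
Your almost-sure argument is the same as the paper's: the SLLN for $\frac{1}{M}W$ and $\frac{1}{M}W^{(2)}$, then continuity of $(x,y)\mapsto y/x^{2}$ at $x=\mu>0$.

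For the expectation, the paper splits on the same event $\{W\ge\mu M/2\}$ and uses your bound $\Gamma_{w}\le\frac{4}{\mu^{2}}\cdot\frac{1}{M}W^{(2)}$ there. It then only controls the \emph{probability} of the complement (Chebyshev, $O(1/M)$), never the size of $\Gamma_{w}$ on it. It also infers uniform integrability from a bounded mean, whereas you correctly use $L^{1}$ convergence. You are right that this is a genuine hole, and it is worse than ``may not be finite for small $M$''. With untruncated normal weights, put $\bar{w}=W/M$ and $S=\sum_{m}(w_{m}-\bar{w})^{2}$, so that $W^{(2)}=W^{2}/M+S$ and $\Gamma_{w}=1+MS/W^{2}$. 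For normal samples, sample mean and sample variance are independent, so $S\sim\sigma^{2}\chi^{2}_{M-1}$ is independent of $W\sim\mathcal{N}(M\mu,M\sigma^{2})$. Hence $\mathbb{E}[\Gamma_{w}]=1+M(M-1)\sigma^{2}\,\mathbb{E}[W^{-2}]=+\infty$ for every $M\ge2$, because $W$ has a positive continuous density at $0$. So the ``in particular'' clause is false under Assumption~\ref{ass:normal-weights} as written. A change of model like yours is necessary, not merely convenient, and you should say so explicitly.

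Your repairs are otherwise sound. Truncation to $[\epsilon,B]$ gives $\Gamma_{w}\le B^{2}/\epsilon^{2}$, so dominated convergence applies. With positivity alone you rightly need an exponential lower-tail bound on $A_{M}^{c}$: even granting $\Gamma_{w}\le M$, Chebyshev would only give $\mathbb{E}[\Gamma_{w}\mathbf{1}_{A_{M}^{c}}]\le4\sigma^{2}/\mu^{2}$, which does not vanish.

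What needs correcting is the identification of the limit. Once the weights follow a truncated law $\tilde{w}$, the SLLN plus your dominated-convergence or uniform-integrability step yields $\mathbb{E}[\Gamma_{w}]\to\mathbb{E}[\tilde{w}^{2}]/(\mathbb{E}[\tilde{w}])^{2}$. This is not $1+\sigma^{2}/\mu^{2}$: for instance, truncation at $0$ raises the mean and lowers the variance by amounts exponentially small in $\mu^{2}/\sigma^{2}$, but nonzero. ``Essentially unchanged'' does not establish the stated constant. As written, your almost-sure part and your expectation part are also proved for two different models. State the expectation result with the truncated moments, and the almost-sure limit for the same truncated model. Present $1+\sigma^{2}/\mu^{2}$ only as an approximation with that explicit error.
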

\begin{IEEEproof}
Let $X$ be a generic random variable with distribution $X\sim\mathcal{N}\left(\mu,\sigma^{2}\right)$.
Then $\mathbb{E}\left[X\right]=\mu$ and $\mathbb{E}\left[X^{2}\right]=\mu^{2}+\sigma^{2}$.
Since $w_{1},\dots,w_{M}$ are i.i.d., the strong law of large numbers
implies that, almost surely,
\begin{equation}
\frac{W}{M}=\frac{1}{M}\sum_{m=1}^{M}w_{m}\xrightarrow{\mathrm{a.s.}}\mathbb{E}\left[X\right]=\mu,
\end{equation}

and
\begin{equation}
\frac{W^{\left(2\right)}}{M}=\frac{1}{M}\sum_{m=1}^{M}w_{m}^{2}\xrightarrow{\mathrm{a.s.}}\mathbb{E}\left[X^{2}\right]=\mu^{2}+\sigma^{2}.
\end{equation}

Hence
\begin{equation}
\Gamma_{w}=M\frac{W^{\left(2\right)}}{W^{2}}=\frac{\frac{1}{M}W^{\left(2\right)}}{\bigl(\frac{1}{M}W\bigr)^{2}}\xrightarrow{\mathrm{a.s.}}\frac{\mu^{2}+\sigma^{2}}{\mu^{2}}=1+\frac{\sigma^{2}}{\mu^{2}}.
\end{equation}

To establish convergence in expectation, it suffices to verify uniform
integrability. By Chebyshev's inequality,
\begin{equation}
\mathbb{P}\left(W<\tfrac{\mu}{2}M\right)\le\frac{4\sigma^{2}}{\mu^{2}M}\to0.
\end{equation}

On the event $\left\{ W\ge\left(\mu/2\right)M\right\} $,
\begin{equation}
\Gamma_{w}\le\frac{4}{\mu^{2}}\frac{1}{M}\sum_{m=1}^{M}w_{m}^{2}.
\end{equation}

Since $\mathbb{E}\left[w_{m}^{2}\right]=\mu^{2}+\sigma^{2}<\infty$,
the right-hand side has uniformly bounded expectation. Therefore $\left\{ \Gamma_{w}\right\} $
is uniformly integrable, implying convergence in expectation, i.e.,
\begin{equation}
\mathbb{E}[\Gamma_{w}]\to1+\frac{\sigma^{2}}{\mu^{2}}.
\end{equation}

This completes the proof.
\end{IEEEproof}
Combining Theorem \ref{theorem3} with Lemma \ref{lem:Gamma-normal}
yields the following Theorem \ref{theorem4}.
\begin{thm}
Suppose Assumption \ref{ass:normal-weights} holds. Then, as the number
of coflows $M\to\infty$, the expected approximation ratio of Algorithm
\ref{alg:alg1} satisfies $\mathbb{E}\left[\frac{\sum_{m=1}^{M}w_{m}T_{m}}{\sum_{m=1}^{M}w_{m}T_{m}^{*}}\right]\le2\Bigl(1+\frac{\sigma^{2}}{\mu^{2}}\Bigr)\max\left\{ K,\tau_{\max}\right\} +o\left(1\right)$,
where $o(1)\to0$ as $M\to\infty$.\label{theorem4}
\end{thm}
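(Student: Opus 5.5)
The plan is to combine the pathwise bound of Theorem \ref{theorem3} with the convergence of $\mathbb{E}[\Gamma_{w}]$ from Lemma \ref{lem:Gamma-normal}. Theorem \ref{theorem3} holds deterministically for every realization of the weights (restricted, if needed, to the truncated nonnegative regime discussed in Assumption \ref{ass:normal-weights}). So for every realization with $\sum_{m}w_{m}T_{m}^{*}>0$ we get the random-variable inequality
\begin{equation}
\frac{\sum_{m=1}^{M}w_{m}T_{m}}{\sum_{m=1}^{M}w_{m}T_{m}^{*}}\le2\psi\Gamma_{w},\qquad\psi=\max\left\{ K,\tau_{\max}\right\} .
\end{equation}
The factor $\psi$ depends only on the demand matrices and the core rates, not on the weights. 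It is therefore a deterministic constant with respect to the randomness of Assumption \ref{ass:normal-weights}.

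The first step is to take expectations of both sides. Monotonicity and linearity of expectation give $\mathbb{E}[\text{ratio}]\le2\psi\,\mathbb{E}[\Gamma_{w}]$.

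The second step is to invoke Lemma \ref{lem:Gamma-normal}, which gives $\mathbb{E}[\Gamma_{w}]=1+\sigma^{2}/\mu^{2}+\varepsilon_{M}$ with $\varepsilon_{M}\to0$. Substituting yields
\begin{equation}
\mathbb{E}[\text{ratio}]\le2\Bigl(1+\frac{\sigma^{2}}{\mu^{2}}\Bigr)\psi+2\psi\varepsilon_{M}.
\end{equation}
The last term is $o(1)$ provided $\psi$ stays bounded as $M$ grows. This holds because $K$ is fixed and $\tau_{\max}\le N$ with $N$ fixed, so $2\psi\varepsilon_{M}\to0$.

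The main obstacle is making the expectation well defined and the pathwise inequality legitimate. Gaussian weights can be negative, and Theorem \ref{theorem3} assumes nonnegative weights; the ratio may also be ill-behaved when $\sum_{m}w_{m}T_{m}^{*}$ is near zero. I would handle this by splitting on the event $E_{M}=\{W\ge(\mu/2)M\}$ from the proof of Lemma \ref{lem:Gamma-normal}, together with nonnegativity of the weights (after truncation). On $E_{M}$ the bound above applies directly. For the complement, Chebyshev gives $\mathbb{P}(E_{M}^{c})\le4\sigma^{2}/(\mu^{2}M)\to0$. The contribution of $E_{M}^{c}$ to the expectation then vanishes, using either the uniform integrability already established for $\Gamma_{w}$ or the convention that truncated weights keep Theorem \ref{theorem3} valid everywhere. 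With truncation, which the assumption explicitly allows, all weights are nonnegative and Theorem \ref{theorem3} holds surely, so the argument reduces to the two steps above.
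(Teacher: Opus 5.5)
Your route is the paper's route. The paper's entire proof of Theorem \ref{theorem4} is the sentence ``Combining Theorem \ref{theorem3} with Lemma \ref{lem:Gamma-normal}'', i.e.\ the pathwise bound $\text{ratio}\le2\psi\Gamma_{w}$ followed by $\mathbb{E}[\Gamma_{w}]\to1+\sigma^{2}/\mu^{2}$; your remark that $\psi$ stays bounded is a correct addition.

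The genuine gap is that this pathwise bound is false, so neither your version nor the paper's proves the statement. Theorem \ref{theorem3} rests on Lemma \ref{lem:Gamma-prefix}, whose ``Cauchy--Schwarz and convexity'' step is not an argument. With unit weights ($\Gamma_{w}=1$) and $a_{m}=\psi T_{\mathrm{LB}}(D_{m})$, that lemma would assert $\sum_{s}(M-s+1)T_{\mathrm{LB}}(D_{\pi(s)})\le\sum_{s}T_{\mathrm{LB}}(D_{\pi(s)})$, which is false. The root cause is that $\sum_{m}w_{m}T_{\mathrm{LB}}(D_{m})$ ignores contention between coflows. For $M$ identical unit flows on one port pair it is about $M$, while every schedule pays about $M^{2}/2$, so no $M$-free factor can be extracted from this lower bound.

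The conclusion of Theorem \ref{theorem3} fails as well. Take $K=1$, $r^{1}=1$, unit weights, and $n$ coflows that are single unit flows on $(1,1)$. Add $m$ coflows $J_{k}$, each with a flow of size $\epsilon$ on $(1,1)$ and a flow of size $2$ on a private pair $(k+1,k+1)$, where $\epsilon=\delta=1/m$. Every coflow has $\tau=1$, so $\psi=\Gamma_{w}=1$. Since $\rho=1<2$, all unit flows precede every $\epsilon$-flow on $(1,1)$. Hence each $J_{k}$ ends after time $n$ and the algorithm pays at least $mn$. Serving the $\epsilon$-flows first instead costs $2m+1+2n+(1+1/m)n(n+1)/2$. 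The ratio therefore tends to $n/2$ as $m\to\infty$, which exceeds $2$ for $n\ge5$.

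This also defeats Theorem \ref{theorem4} itself. Take $n\approx M^{1/3}$ and i.i.d.\ weights with small $\sigma/\mu$; these preserve the above order for all but a negligible fraction of pairs. The ratio is then of order $M^{1/3}$ with probability tending to one. Fixing $N$ does not help: spreading the $J_{k}$ over $q$ private pairs, about $n/\sqrt{2q}$ per pair, gives a ratio of about $\sqrt{q/2}$ with $N=q+1$.

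Your treatment of negative weights also does not go through as written.
\begin{itemize}
\item The event $E_{M}=\{W\ge\mu M/2\}$ does not make the individual $w_{m}$ nonnegative, so Theorem \ref{theorem3} is unavailable on $E_{M}$. With any $w_{m}<0$ the objective is unbounded below, since that coflow can simply be delayed.
\item Without truncation, $\mathbb{E}[\Gamma_{w}]=+\infty$ for every $M\ge2$, because $W$ has positive density at $0$ while $W^{(2)}>0$. So the uniform integrability you cite does not hold; the proof of Lemma \ref{lem:Gamma-normal} only controls $E_{M}$.
\item Truncation is therefore necessary, and it is not negligible, since $\mathbb{P}(\min_{m}w_{m}<0)\to1$. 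But then Lemma \ref{lem:Gamma-normal} must be redone for the truncated law. For $w^{+}=\max\{w,0\}$ one has $\mathbb{E}[(w^{+})^{2}]\le\mu^{2}+\sigma^{2}$ and $\mathbb{E}[w^{+}]\ge\mu$, so the limit only decreases.
\item After truncation $\Gamma_{w}\le M$, so bounding the $E_{M}^{c}$ contribution by $2\psi M\,\mathbb{P}(E_{M}^{c})$ needs $\mathbb{P}(E_{M}^{c})=o(1/M)$. A sub-Gaussian tail bound gives this, but the Chebyshev bound you quote does not.
\end{itemize}
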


\section{Experimental Evaluations\label{sec:Experimental-Evaluations}}

In this section, we evaluate the performance of the proposed Algorithm
\ref{alg:alg1} using the Facebook trace \cite{facebook}. 

\subsection{Setup}

We utilize the widely adopted Facebook trace \cite{facebook}, collected
from a MapReduce cluster comprising 3000 machines and 150 racks. This
dataset has been extensively employed in prior coflow scheduling research
\cite{sunflow,regularization,wang2024scheduling,wang2023efficient,wang2023online,wang2025optimal,huang2020weaver}.
The trace contains 526 coflows, which are typically simplified into
a 150-port network while preserving the original arrival interval
pattern. Each coflow records the set of receivers, the number of bytes
received, and the associated sender at the receiver level rather than
the flow level. To construct the $N\times N$ demand matrix for each
coflow, we convert receiver-level demands into sender-receiver flows
as follows. For each receiver, the total received bytes are pseudo-uniformly
distributed across the associated senders, with a small random perturbation
introduced to prevent perfectly uniform splitting. We then randomly
select $N$ machines from the trace as servers and map them to ingress
and egress ports, thereby generating an $N$-port coflow instance.

\subsection{Baselines}

We consider the following baselines:
\begin{itemize}
\item RHO-ASSIGN Replace the $\tau$-aware cross-core flow assignment with
a $\rho$-only policy that assigns each flow to the core minimizing
$\rho_{1:m}^{k}/r^{k}$, i.e., ignoring the reconfiguration term $\tau_{1:m}^{k}\delta$;
the global coflow order and per-core scheduling remain the same as
Algorithm \ref{alg:alg1}.
\item RAND-ASSIGN Replace the cross-core flow assignment with randomized
core selection, assigning each flow to core $k$ with probability
proportional to $r^{k}$. The global coflow order and per-core scheduling
remain the same as Algorithm \ref{alg:alg1}.
\item SUNFLOW-CORE Replace the per-core circuit scheduling module with the
single-core scheduler Sunflow \cite{sunflow} under the \textit{not-all-stop}
model. The global order and cross-core assignment follow Algorithm
\ref{alg:alg1}.
\item RAND-SUNFLOW\textbf{ }Replace the cross-core flow assignment with
randomized core selection (rate-proportional), and schedule the traffic
on each core using Sunflow. The global coflow order remains the same
as in Algorithm \ref{alg:alg1}.
\end{itemize}

\subsection{Results}

Unless otherwise specified, we use the following default settings:
(i) number of ingress/egress ports $N=16$; (ii) number of coflows
$M=100$, randomly sampled from the trace; (iii) number of cores $K=3$;
(iv) core rate vector $[10,20,30]$; (v) aggregated rate $R=60$;
and (vi) reconfiguration delay $\delta=8$.

\subsubsection{Ablation under the Default Setting}

Fig. \ref{fig:weighted_tail_cct} reports the normalized total weighted
CCT and normalized tail CCT (p95/p99) under the default setting, where
all results are normalized to $\textsc{Ours}$. Compared with $\textsc{Ours}$,
\textsc{RHO-ASSIGN} incurs $1.64\times$ higher total weighted CCT
and approximately $1.67\times$ higher tail CCT, indicating that ignoring
reconfiguration overhead in cross-core assignment leads to significantly
inferior placements. RAND-ASSIGN performs slightly better than RHO-ASSIGN,
but still yields $1.31\times$ that of $\textsc{Ours}$. The performance
gap widens drastically when replacing the core-level circuit scheduler
with Sunflow (SUNFLOW-CORE), where the normalized total weighted CCT
increases to $2.64\times$ and the normalized tail CCT surges to nearly
$4\times$. The worst case is RAND-SUNFLOW, with 3.03$\times$ total
weighted CCT and about 4.7$\times$ tail CCT.

\begin{figure}[h]
\centering\includegraphics[width=9cm,totalheight=8cm,keepaspectratio,height=5.8cm]{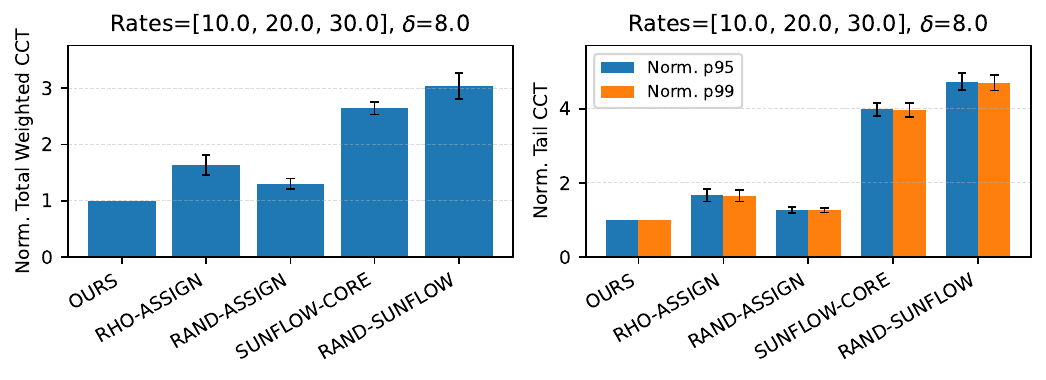}
\caption{Default setting for different algorithm variants.}

\label{fig:weighted_tail_cct}
\end{figure}

\subsubsection{Impact of Reconfiguration Delay ($\delta$-Sensitivity)}

We evaluate sensitivity to reconfiguration delay by fixing $N=16$
and $M=100$, and varying $\delta\in\left\{ 2,4,6,8,10,12\right\} $.
For each $K\in\left\{ 3,4,5\right\} $, we compare the imbalanced
(heterogeneous) and balanced (homogeneous) core rate vectors, as shown
in Fig. \ref{fig:core=00003D3}, Fig. \ref{fig:core=00003D4} and
Fig. \ref{fig:core=00003D5}.
\begin{itemize}
\item $K=3$ (Fig. \ref{fig:core=00003D3}). $\textsc{Ours}$ is robust
to increasing $\delta$ under both rate settings. Under imbalanced
rates case, RHO-ASSIGN and RAND-ASSIGN incur approximately $1.4\times$
and $1.3\times$ the total weighted CCT of $\textsc{Ours}$, respectively,
while SUNFLOW-CORE and RAND-SUNFLOW perform substantially worse. All
schemes show improved performance under balanced rates, and $\textsc{Ours}$
still achieves the lowest total weighted CCT. Under balanced rates,
all schemes improve, but $\textsc{Ours}$ still achieves the lowest
total weighted CCT.
\end{itemize}
\begin{figure}[h]
\centering\includegraphics[width=9cm,totalheight=8cm,keepaspectratio,height=5.8cm]{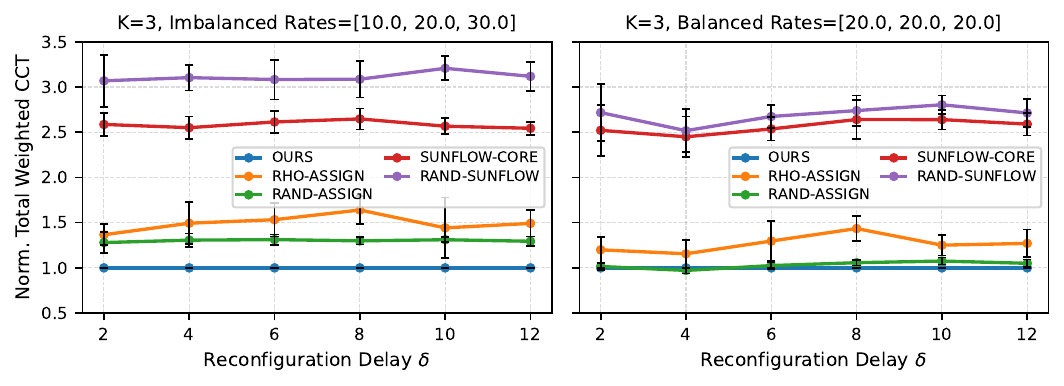}
\caption{Normalized total weighted CCT versus $\delta$ for $K=3$.}

\label{fig:core=00003D3}
\end{figure}

\begin{itemize}
\item $K=4$ (Fig. \ref{fig:core=00003D4}). The same trend persists with
more cores. Under imbalanced rates, RHO-ASSIGN is about $1.45\times$
to $1.75\times$ worse than $\textsc{Ours}$, while \textsc{RAND-ASSIGN}
is about $1.34\times$ to $1.43\times$ worse. Under balanced rates,
\textsc{RHO-ASSIGN} ranges from $1.22\times$ to $1.46\times$, and
RAND-ASSIGN remains close to $\textsc{Ours}$ at about $1.03\times$-$1.06\times$.
In contrast, SUNFLOW-CORE and RAND-SUNFLOW remain substantially worse,
at about $2.78\times$-$2.88\times$ and $2.90\times$-$3.07\times$,
respectively.
\end{itemize}
\begin{figure}[h]
\centering\includegraphics[width=9cm,totalheight=8cm,keepaspectratio,height=5.8cm]{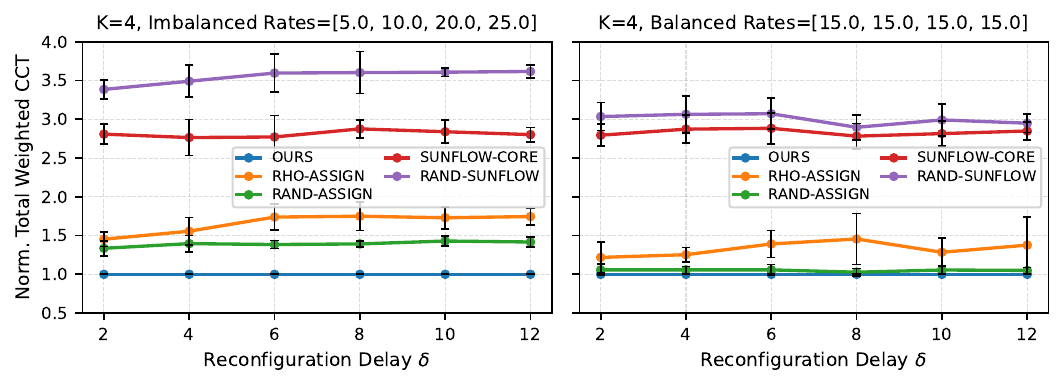}
\caption{Normalized total weighted CCT versus $\delta$ for $K=4$.}

\label{fig:core=00003D4}
\end{figure}

\begin{itemize}
\item $K=5$ (Fig. \ref{fig:core=00003D5}). Under imbalanced rates, RHO-ASSIGN
and RAND-ASSIGN are approximately $1.42\times$-$1.73\times$ and
$1.51\times$-$1.70\times$ worse than $\textsc{Ours}$, respectively.
SUNFLOW-CORE exhibits a much larger degradation, ranging from about
$2.93\times$ to $3.19\times$, while RAND-SUNFLOW performs worst
at about $3.90\times$-$4.39\times$. Under balanced rates, SUNFLOW-CORE
and RAND-SUNFLOW remain substantially higher than $\textsc{Ours}$,
at approximately 2.90\texttimes--3.17\texttimes{} and 3.14\texttimes--3.29\texttimes ,
respectively.
\end{itemize}
\begin{figure}[h]
\centering\includegraphics[width=9cm,totalheight=8cm,keepaspectratio,height=5.8cm]{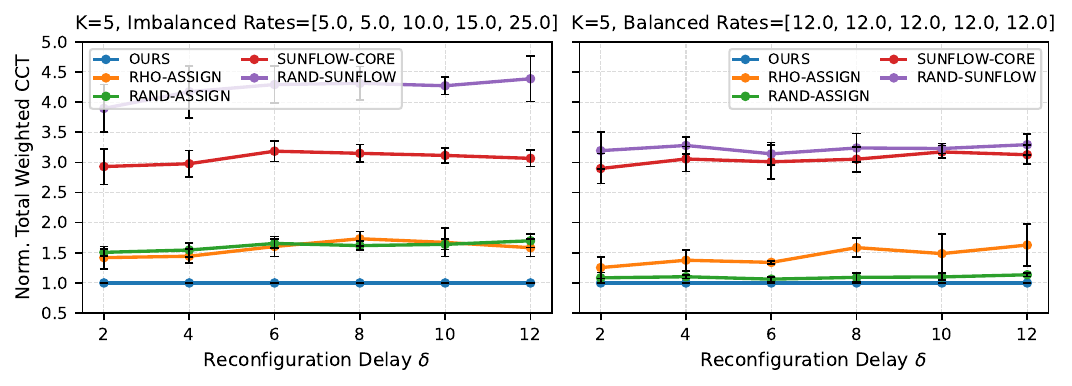}
\caption{Normalized total weighted CCT versus $\delta$ for $K=5$.}

\label{fig:core=00003D5}
\end{figure}

\subsubsection{Impact of the Number of Coflows ($M$-Scaling)}

We further study how the performance gap evolves as the number of
coflows $M$ increases under different numbers of OCS cores. We fix
the fabric size to $N=16$ and the reconfiguration delay to $\delta=8$,
and vary the number of coflows $M\in\left\{ 50,100,150,200,250\right\} $.
We report results for $K\in\left\{ 3,4,5\right\} $ under heterogeneous
(imbalanced) and homogeneous (balanced) rate vectors in Fig. \ref{fig:coflows50-250}-\ref{fig:coflows50-250-2}.

\begin{figure}[h]
\centering\includegraphics[width=9cm,totalheight=8cm,keepaspectratio,height=5.8cm]{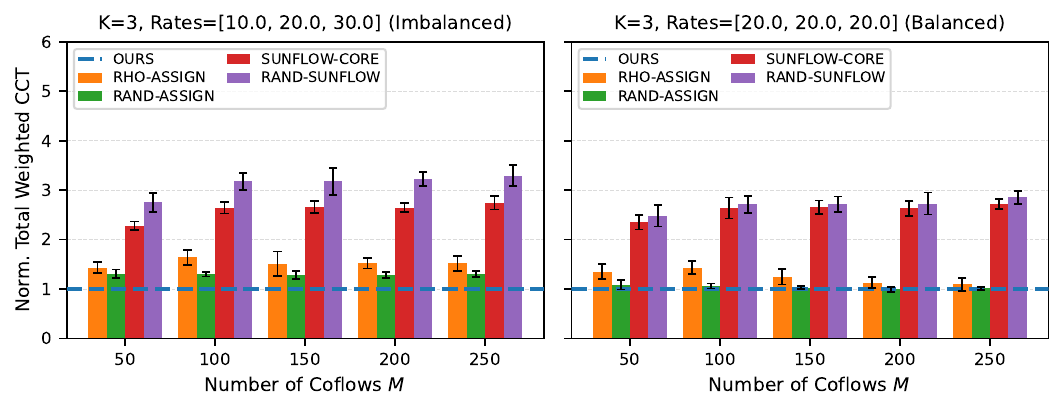}
\caption{Normalized total weighted CCT versus $M$ for $K=3$.}

\label{fig:coflows50-250}
\end{figure}

\begin{figure}[h]
\centering\includegraphics[width=9cm,totalheight=8cm,keepaspectratio,height=5.8cm]{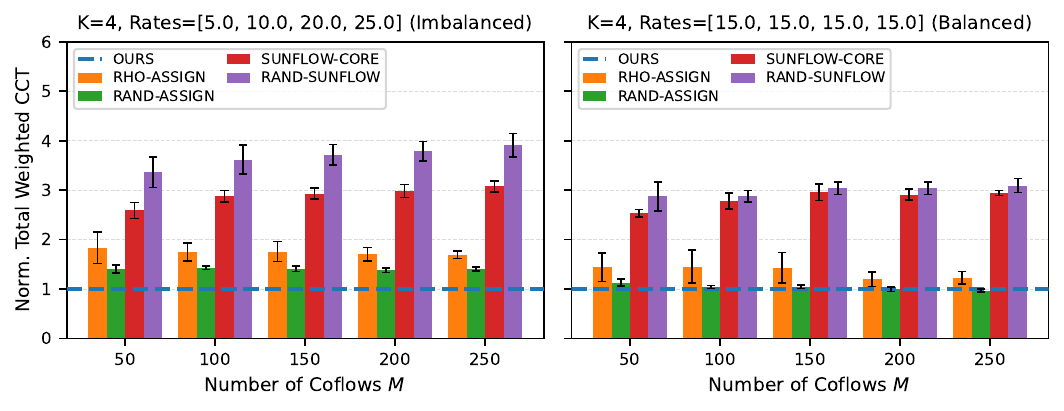}
\caption{Normalized total weighted CCT versus $M$ for $K=4$.}

\label{fig:coflows50-250-1}
\end{figure}

\begin{figure}[h]
\centering\includegraphics[width=9cm,totalheight=8cm,keepaspectratio,height=5.8cm]{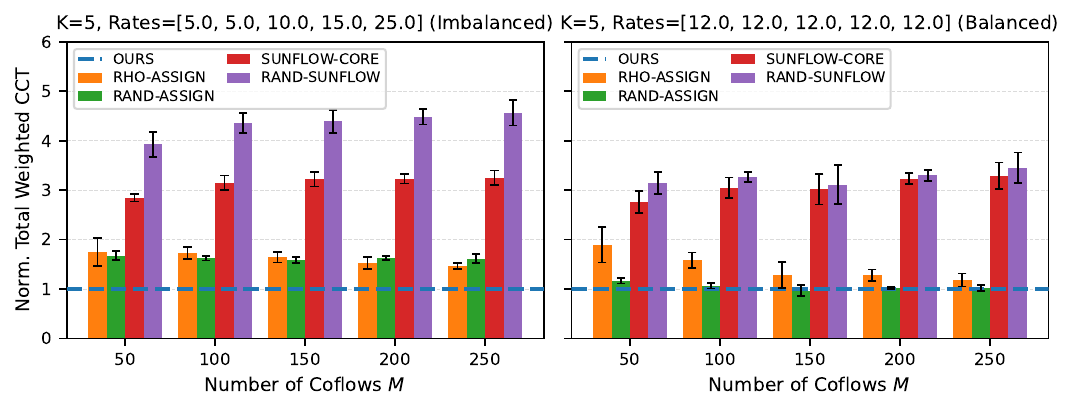}
\caption{Normalized total weighted CCT versus $M$ for $K=5$.}

\label{fig:coflows50-250-2}
\end{figure}

\begin{itemize}
\item $K=3$ (Fig. \ref{fig:coflows50-250}) Under heterogeneous rates,
\textsc{RHO-ASSIGN} and \textsc{RAND-ASSIGN} stay around $1.29\times$-$1.64\times$,
while \textsc{SUNFLOW-CORE} and \textsc{RAND-SUNFLOW} grow to about
$2.65\times$-$2.75\times$ and $2.76\times$-$3.30\times$. Under
balanced rates, \textsc{RHO-ASSIGN}/\textsc{RAND-ASSIGN} move closer
to \textsc{Ours} as $M$ increases, reaching approximately $1.10\times$
and $1.02\times$ at $M=250$, respectively. In contrast, \textsc{SUNFLOW-CORE}
and \textsc{RAND-SUNFLOW} remain significantly higher, reaching about
$2.72\times$ and $2.85\times$ at $M=250$.
\item $K=4$ (Fig. \ref{fig:coflows50-250-1}) With heterogeneous rates,
\textsc{RHO-ASSIGN} and \textsc{RAND-ASSIGN} are consistently worse
than \textsc{Ours} by about $1.69\times$-$1.83\times$ and $1.38\times$-$1.43\times$,
respectively, while \textsc{SUNFLOW-CORE} and \textsc{RAND-SUNFLOW}
increase with $M$, reaching about $3.08\times$ and $3.91\times$
at $M=250$. With balanced rates, \textsc{RHO-ASSIGN} decreases as
$M$ grows, down to about $1.23\times$, while \textsc{RAND-ASSIGN}
becomes highly competitive, ranging from approximately $0.97\times$
to $1.05\times$). However, SUNFLOW-CORE and RAND-SUNFLOW still remain
substantially worse, staying around $2.53\times$-$3.09\times$.
\item $K=5$ (Fig. \ref{fig:coflows50-250-2}) Under heterogeneous rates
$[5,5,10,15,25]$, the gaps further widen as $M$ increases. \textsc{SUNFLOW-CORE}
rises from approximately $2.85\times$ to $3.25\times$, and \textsc{RAND-SUNFLOW}
from approximately $3.93\times$ to $4.57\times$, whereas \textsc{RHO-ASSIGN}/\textsc{RAND-ASSIGN}
remain around $1.46\times$-$1.75\times$ and $1.59\times$-$1.68\times$.
With balanced rates, \textsc{RHO-ASSIGN} and \textsc{RAND-ASSIGN}
move closer to \textsc{Ours} as $M$ grows (down to about $1.18\times$
and $1.02\times$), while \textsc{SUNFLOW-CORE} and \textsc{RAND-SUNFLOW}
still increase to about $3.29\times$ and $3.46\times$ at $M=250$.
\end{itemize}

\section{Conclusions\label{sec:Conclusions}}

This paper investigates the multi-coflow scheduling problem in multi-core
data center networks, focusing particularly on multiple OCS cores
operating in parallel under the \textit{not-all-stop} (asynchronous)
reconfiguration model. We develop an approximation algorithm for minimizing
the total weighted coflow completion time (CCT) and establish a global
worst-case performance guarantee. Specifically, in a $K$-core $N\times N$
OCS network, our algorithm achieves a $2M\frac{w_{\max}}{w_{\min}}\max\left\{ K,\tau_{\max}\right\} $-approximation
where $M$ is the number of coflows, $w_{\max}$ and $w_{\min}$ are
the maximum and minimum coflow weights, respectively, and $\tau_{\max}\leq N$
captures the maximum coflow traffic intensity across cores. This bound
explicitly characterizes how OCS core parallelism ($K$) and coflow
structure ($\tau_{\max}$) affect worst-case performance. Extensive
trace-driven simulations further demonstrate that our method consistently
reduces the weighted CCT compared to all baselines.

\section*{Acknowledgement}

This work is supported by Department of Environment, Science and Innovation
of Queensland State Government under Quantum 2032 Challenge Program
(Project \#Q2032001). The corresponding author is Hong Shen.

\bibliographystyle{IEEEtran}
\addcontentsline{toc}{section}{\refname}\bibliography{IEEE_Journal}

@Article{Baraat,
  author    = {Dogar, Fahad R and Karagiannis, Thomas and Ballani, Hitesh and Rowstron, Antony},
  journal   = {ACM SIGCOMM Computer Communication Review},
  title     = {Decentralized task-aware scheduling for data center networks},
  year      = {2014},
  number    = {4},
  pages     = {431--442},
  volume    = {44},
  publisher = {ACM New York, NY, USA},
}

@Article{Aalo,
  author    = {Chowdhury, Mosharaf and Stoica, Ion},
  journal   = {ACM SIGCOMM Computer Communication Review},
  title     = {Efficient coflow scheduling without prior knowledge},
  year      = {2015},
  number    = {4},
  pages     = {393--406},
  volume    = {45},
  publisher = {ACM New York, NY, USA},
}

@Misc{facebook,
  howpublished = {\url{https://github.com/coflow/coflow-benchmark}},
  title        = {FaceBookTrace},
  year         = {2019},
}

@InProceedings{networking,
  author    = {Chowdhury, Mosharaf and Stoica, Ion},
  booktitle = {Proceedings of the 11th ACM Workshop on Hot Topics in Networks},
  title     = {Coflow: A networking abstraction for cluster applications},
  year      = {2012},
  pages     = {31--36},
}

@inproceedings{literature9,
	title={Minimizing the total weighted completion time of coflows in datacenter networks},
	author={Qiu, Zhen and Stein, Cliff and Zhong, Yuan},
	booktitle={Proceedings of the 27th ACM symposium on Parallelism in Algorithms and Architectures},
	pages={294--303},
	year={2015}
}

@article{literature4,
	title={Managing data transfers in computer clusters with orchestra},
	author={Chowdhury, Mosharaf and Zaharia, Matei and Ma, Justin and Jordan, Michael I and Stoica, Ion},
	journal={ACM SIGCOMM Computer Communication Review},
	volume={41},
	number={4},
	pages={98--109},
	year={2011},
	publisher={ACM New York, NY, USA}
}

@article{literature32,
	title={Efficient scheduling of weighted coflows in data centers},
	author={Wang, Zhiliang and Zhang, Han and Shi, Xingang and Yin, Xia and Li, Yahui and Geng, Haijun and Wu, Qianhong and Liu, Jianwei},
	journal={IEEE Transactions on Parallel and Distributed Systems},
	volume={30},
	number={9},
	pages={2003--2017},
	year={2019},
	publisher={IEEE}
}

@inproceedings{literature6,
	title={Efficient coflow scheduling with varys},
	author={Chowdhury, Mosharaf and Zhong, Yuan and Stoica, Ion},
	booktitle={Proceedings of the 2014 ACM conference on SIGCOMM},
	pages={443--454},
	year={2014}
}

@InProceedings{sunflow,
  author    = {Huang, Xin Sunny and Sun, Xiaoye Steven and Ng, TS Eugene},
  booktitle = {Proceedings of the 12th International on Conference on emerging Networking EXperiments and Technologies},
  title     = {Sunflow: Efficient optical circuit scheduling for coflows},
  year      = {2016},
  pages     = {297--311},
}

@Article{zhang2020minimizing,
  author    = {Zhang, Tong and Ren, Fengyuan and Bao, Jiakun and Shu, Ran and Cheng, Wenxue},
  journal   = {IEEE Transactions on Parallel and Distributed Systems},
  title     = {Minimizing coflow completion time in optical circuit switched networks},
  year      = {2020},
  number    = {2},
  pages     = {457--469},
  volume    = {32},
  publisher = {IEEE},
}

@InProceedings{omco,
  author       = {Xu, Chao and Tan, Haisheng and Hou, Jiahui and Zhang, Chi and Li, Xiang-Yang},
  booktitle    = {2018 IEEE International Conference on Communications (ICC)},
  title        = {OMCO: Online multiple coflow scheduling in optical circuit switch},
  year         = {2018},
  organization = {IEEE},
  pages        = {1--6},
}

@Article{regularization,
  author    = {Tan, Haisheng and Zhang, Chi and Xu, Chao and Li, Yupeng and Han, Zhenhua and Li, Xiang-Yang},
  journal   = {IEEE/ACM Transactions on Networking},
  title     = {Regularization-based coflow scheduling in optical circuit switches},
  year      = {2021},
  number    = {3},
  pages     = {1280--1293},
  volume    = {29},
  publisher = {IEEE},
}

@article{improved,
  author={Shafiee, Mehrnoosh and Ghaderi, Javad},
  journal={IEEE/ACM Transactions on Networking},
  title={An improved bound for minimizing the total weighted completion time of coflows in datacenters},
  volume={26},
  number={4},
  pages={1674--1687},
  year={2018},
  publisher={IEEE}
}

@InProceedings{decentralized1,
  author       = {Luo, Shouxi and Yu, Hongfang and Zhao, Yangming and Wu, Bin and Wang, Sheng and others},
  booktitle    = {2015 IEEE International Conference on Communications (ICC)},
  title        = {Minimizing average coflow completion time with decentralized scheduling},
  year         = {2015},
  organization = {IEEE},
  pages        = {307--312},
}

@InProceedings{reco,
  author       = {Zhang, Chi and Tan, Haisheng and Xu, Chao and Li, Xiang-Yang and Tang, Shaojie and Li, Yupeng},
  booktitle    = {2019 IEEE 39th International Conference on Distributed Computing Systems (ICDCS)},
  title        = {Reco: Efficient regularization-based coflow scheduling in optical circuit switches},
  year         = {2019},
  organization = {IEEE},
  pages        = {111--121},
}

@Article{wang2023efficient,
  author    = {Wang, Xin and Shen, Hong and Tian, Hui},
  journal   = {IEEE Transactions on Network and Service Management},
  title     = {Efficient and Fair: Information-Agnostic Online Coflow Scheduling by Combining Limited Multiplexing with DRL},
  year      = {2023},
  number    = {4},
  pages     = {4572--4584},
  volume    = {20},
  publisher = {IEEE},
}

@Article{wang2023online,
  author    = {Wang, Xin and Shen, Hong},
  journal   = {Future Generation Computer Systems},
  title     = {Online scheduling of coflows by attention-empowered scalable deep reinforcement learning},
  year      = {2023},
  pages     = {195--206},
  volume    = {146},
  publisher = {Elsevier},
}

@article{wang2024scheduling,
  title={Scheduling coflows in hybrid optical-circuit and electrical-packet switches with performance guarantee},
  author={Wang, Xin and Shen, Hong and Tian, Hui},
  journal={IEEE/ACM Transactions on Networking},
  volume={32},
  number={3},
  pages={2299--2314},
  year={2024},
  publisher={IEEE}
}

@article{chen2023scheduling,
  title={Scheduling coflows for minimizing the total weighted completion time in heterogeneous parallel networks},
  author={Chen, Chi-Yeh},
  journal={Journal of Parallel and Distributed Computing},
  volume={182},
  pages={104752},
  year={2023},
  publisher={Elsevier}
}

@InProceedings{CODA,
  author    = {Zhang, Hong and Chen, Li and Yi, Bairen and Chen, Kai and Chowdhury, Mosharaf and Geng, Yanhui},
  booktitle = {Proceedings of the 2016 ACM SIGCOMM Conference},
  title     = {Coda: Toward automatically identifying and scheduling coflows in the dark},
  year      = {2016},
  pages     = {160--173},
}

@inproceedings{huang2020weaver,
  title={Weaver: Efficient coflow scheduling in heterogeneous parallel networks},
  author={Huang, Xin Sunny and Xia, Yiting and Ng, TS Eugene},
  booktitle={2020 IEEE International Parallel and Distributed Processing Symposium (IPDPS)},
  pages={1071--1081},
  year={2020},
  organization={IEEE}
}

@article{wang2025optimal,
  title={Optimal Partitioning of Traffic Demand for Coflow Scheduling in Hybrid Switches},
  author={Wang, Xin and Shen, Hong and Tian, Hui},
  journal={IEEE Transactions on Network and Service Management},
  year={2025},
  publisher={IEEE}
}

@article{chen2023efficient,
  title={Efficient approximation algorithms for scheduling coflows with total weighted completion time in identical parallel networks},
  author={Chen, Chi-Yeh},
  journal={IEEE Transactions on Cloud Computing},
  volume={12},
  number={1},
  pages={116--129},
  year={2023},
  publisher={IEEE}
}

@inproceedings{poutievski2022jupiter,
  title={Jupiter evolving: transforming google's datacenter network via optical circuit switches and software-defined networking},
  author={Poutievski, Leon and Mashayekhi, Omid and Ong, Joon and Singh, Arjun and Tariq, Mukarram and Wang, Rui and Zhang, Jianan and Beauregard, Virginia and Conner, Patrick and Gribble, Steve and others},
  booktitle={Proceedings of the ACM SIGCOMM 2022 Conference},
  pages={66--85},
  year={2022}
}

\vspace{12pt}

\end{document}